\documentclass[12pt,a4paper]{article}
\pagestyle{plain} \textwidth 6.0in \textheight 9in \oddsidemargin
0.25in \topmargin -0.5in

\RequirePackage[OT1]{fontenc}
\RequirePackage{amsthm,amsmath}
\RequirePackage[numbers]{natbib}
\RequirePackage[colorlinks,citecolor=blue,urlcolor=blue]{hyperref}
\usepackage[textsize=footnotesize]{todonotes}
\usepackage{hyperref}
\usepackage{cite}

\usepackage{graphicx}     
\usepackage{amsmath}
\usepackage{amssymb}
\usepackage{amsthm} 
\usepackage[countmax]{subfloat}
\usepackage{fourier}
\usepackage{xcolor} 
\usepackage{mathtools}
\usepackage{enumerate}
\usepackage{natbib}
\bibliographystyle{apalike}
\usepackage{url}
\usepackage{verbatim}
\usepackage{booktabs}
\usepackage{placeins}
\usepackage{multirow}
\usepackage{caption}
\usepackage{subcaption}
\usepackage{mathrsfs}
\bibliographystyle{apalike}

\def\P{\mathbb P}
\def\R{\mathbb R}
\def\E{\mathbb E}

\newtheorem{prop}{Proposition}[section]
\newtheorem{thm}{Theorem}[section]
\newtheorem{lemma}{Lemma}[section]

\newtheorem{Remark}{Remark}[section]
\newtheorem{Corollary}{Corollary}[section]

\newcommand{\Cn}{\widehat{C}_N}

\allowdisplaybreaks


\begin{document}

\title{A test for separability in covariance operators of random surfaces }
\author{Pramita Bagchi\\
George Mason University,\\
Department of Statistics\\
Fairfax, VA -22030 \\
USA\\
\\
Holger Dette \\
Ruhr-Universit\"at Bochum \\
Fakult\"at f\"ur Mathematik \\
44780 Bochum \\
Germany}
\date{}
\maketitle

\begin{abstract}
The assumption of separability is a simplifying and very popular
 assumption in the analysis of spatio-temporal  or  hypersurface data structures. It is often
 made in situations where the covariance structure cannot be easily estimated, for example because
 of a small sample size or because of computational storage problems.
 In this paper we propose a new and very simple test to validate this assumption. 
 Our approach is based on a  measure of  separability
which is zero in the case of separability and positive otherwise. 
{We derive 
the asymptotic distribution  of a  corresponding
estimate  under the null hypothesis and the alternative
and develop  an asymptotic and a bootstrap test, which are very easy to implement. In particular, our approach does neither require projections on
 subspaces generated by the eigenfunctions of the covariance operator nor distributional assumptions as recently used by \citet{Aston2017} and \citet{conkokrei2017}
to construct tests for separability.} We investigate the finite sample performance by means of a  simulation  study and also provide a comparison with the currently available methodology. Finally,
 the new procedure is illustrated analyzing a data example.
\end{abstract}

\section{Introduction}\label{sec1}
\def\theequation{1.\arabic{equation}}
\setcounter{equation}{0}

Data, which is functional {\bf and} multidimensional is usually called surface data 
 and  arises in areas such as  medical imaging [see  \citet{worsley1996,skup2010longitudinal}],  spectrograms derived from audio signals
  or geolocalized data  [see \citet{rabiner1978,bar2008spatio}].
  In many of these ultra high-dimensional  problems a completely non-parametric estimation of the covariance
  operator is not possible as  the number of available observations   is  small compared to the dimension.
  A common approach to obtain reasonable estimates in this context  are structural assumptions on the
   covariance  of the underlying process, and in recent years the assumption of
   separability has become  very popular, for  example in the analysis of geostatistical space-time models
   [see \citet{genton2007,gnegengut2007}, among others]. Roughly speaking, this assumption allows to write the covariance $$c(s,t,s^\prime, t^\prime) =
   \E [X(s,t) X(s^\prime,t^\prime) ] $$ of a (real valued) space-time process $\{X(s,t) \}_{(s,t) \in  S\times T}$
   as a product of the space  and
   time covariance function, that is
\begin{equation}\label{1.1}
c(s,t,s^\prime, t^\prime) = c_1(s,s^\prime) c_2(t, t^\prime).
\end{equation}
It was pointed  out by many authors that the   assumption  of separability yields a substantial simplification of the estimation problem
 and thus reduces  computational  costs in  the estimation of the covariance in high dimensional problems [see
 for example  \citet{huizengaetal2002,rougier2017representation}].
  Despite of its importance, there exist only a few  tools to validate 
 the  assumption of separability for surface data.

Many authors  developed tests for spatio-temporal data. For example,  
\citet{fuentes2006} proposed a test based on the spectral representation,
  and   \citet{luzimm2005,mitgengum2005,mitgengum2006} investigated likelihood ratio tests under the assumption of a normal distribution.
 Recently,  \citet{conkokrei2017}
   derived the joint distribution of the three statistics appearing in the likelihood ratio test and used this result to derive the asymptotic
   distribution of the (log) likelihood ratio. These authors also proposed alternative  tests which are based on distances between an estimator of the covariance
   under the assumption of separability and an estimator which does not use this assumption. 
   
    \citet{Aston2017} considered the problem of testing for separability in the context of hypersurface data. These authors pointed out  that many available methods require  the estimation of the full multidimensional covariance structure, which
    can become infeasible for high dimensional data. In order to address this issue they developed
    a bootstrap test for  applications, where replicates from the underlying random process
    are available.  To avoid estimation and  storage of the full covariance
   finite-dimensional projections of the difference between the covariance operator
    and a nonparametric separable approximation  {(using the partial trace operator)} were  proposed.
    In particular they suggested to  project onto subspaces  generated by the eigenfunctions of the covariance operator estimated under the
assumption of separability.  However, as pointed in the same references the choice of the number of eigenfunctions onto which one should project is not trivial and  the test might be sensitive
with respect  to this  choice. Moreover, the computational costs increase substantially with the number of eigenfunctions.

In this paper we present an alternative and  simple test for the hypothesis of separability in hypersurface data. We consider a similar
setup as in    \citet{Aston2017} and proceed in two steps. 
 {
Roughly speaking we derive an {\it explicit} expression for the minimal distance between the covariance operator
and its approximation by a separable covariance operator.
It turns out that this minimum vanishes if and only if the covariance operator is separable.}
Secondly, we directly estimate the minimal distance (and not the covariance operator itself) from the available data. 
{ As a consequence the
calculation of the test statistic does neither  use an estimate of the full non-separable covariance operator nor requires the specification
of subspaces used for a projection.}

  In Section \ref{sec2} we  review some basic terminology and 
   {discuss the problem of finding a best approximation of  the covariance operator by  a  separable covariance operator.
 The corresponding  minimum distance could also be
 interpreted as a measure of deviation from separability (it is zero in the case of separability and positive otherwise). 
 In Section \ref{sec3} we propose an estimator
 of the minimum distance, prove its consistency  and derive its asymptotic distribution  under the null hypothesis
 and alternative. These results are also used to develop an asymptotic and a  bootstrap test for the hypothesis of separability, which are - in contrast to the currently
 available methods - consistent against all alternatives.
  Moreover, statistical guarantees can be derived under more general and easier to verify moment assumptions 
 than  in \citet{Aston2017}.} 
 Section \ref{sec4} is devoted to an investigation of the finite sample properties of the new tests and a comparison with  two alternative tests for this problem, which have recently been proposed by   \citet{Aston2017} and  \citet{conkokrei2017}. In particular
 we demonstrate that - despite of their simplicity - the new procedures 
 have very competitive properties compared to the currently available methodology.
 Finally, some technical details are deferred to the Appendix  \ref{sec5}.

\section{Hilbert spaces and a measure of separability} \label{sec2}
\def\theequation{2.\arabic{equation}}
\setcounter{equation}{0}
\hspace{-0.1 in} We begin introducing some basic facts about Hilbert spaces, Hilbert-Schmidt operators and tensor products. For more details we refer to the monographs of  \citet{weidmann1980},  \citet{dunford1988} or \citet{gohberg1990}.
Let $H$ be a real separable Hilbert space with inner product $\langle  \cdot , \cdot \rangle$ and norm $\|  \cdot \|$. The space of bounded linear operators on $H$ is denoted by $S_{\infty}(H)$ with operator norm
$$
\VERT T\VERT_{\infty} := \sup_{\|f\| \leq 1}\|Tf\|.
$$
A bounded linear operator $T$ is said to be compact if it can be written as 
$$T=\sum_{j\ge1}s_j(T)\langle e_j,\cdot\rangle f_j,
$$
where $\{e_j:j\ge1\}$ and $\{f_j:j\ge1\}$ are orthonormal sets of $H$, $\{s_j(T):j\ge1\}$ are the singular values of $T$ and the series converges in the operator norm. We say that a compact operator $T$ belongs to the Schatten class of order $p\ge1$ and write $T \in S_p(H)$ if
$$
\VERT T \VERT_p= \Big (\sum_{j\ge1}{ { s_j(T)^p}} \Big )^{1/p} <\infty.
$$
 The Schatten class of order $p\ge1$ is a Banach space with  norm
$\VERT . \VERT_p$ and with the property that $S_p(H) \subset S_q(H)$ for $p < q$.  In particular we are interested in Schatten classes of order  $p=1$ and $2$. A compact operator $T$ is called Hilbert-Schmidt operator if $T\in S_2( H)$ and trace class if $T\in S_1( H)$. The space of Hilbert-Schmidt operators $S_2( H)$ is also a Hilbert space with the Hilbert-Schmidt inner product given by
$$
\langle A,B\rangle
=\sum_{j\ge1}\langle Ae_j,Be_j\rangle
$$
for each $A,B\in S_2(H)$, where $\{e_j:j\ge1\}$ is an orthonormal basis (the inner product does not depend on the choice of the basis).

{ Let $H_1$ and $H_2$ be two real separable Hilbert spaces. For $u \in H_1$ and $v \in H_2$, we define the bilinear form $u \otimes v: H_1 \times H_2 \to \mathbb{R}$ by
$$[u \otimes v](x,y) := \langle u,x \rangle \langle v,y \rangle, ~~~~ (x,y) \in H_1 \times H_2.$$
Let $\mathcal{P}$ be the set of all finite linear combinations of such bilinear forms. An inner product on $\mathcal{P}$ can be defined as 
$\langle u \otimes v, w \otimes z \rangle = \langle u,w \rangle \langle v,z \rangle$, for $u,w \in H_1$ and $v,z \in H_2$. The completion of $\mathcal{P}$ under the aforementioned inner product is called the tensor product of $H_1$ and $H_2$ and denoted as $H_1 \otimes H_2$.}

For $C_1 \in S_{\infty}(H_1)$ and $C_2 \in S_{\infty}(H_2)$, the tensor product $C_1 \tilde{\otimes} C_2$ is defined as the unique linear operator on $H := H_1 \otimes H_2$ satisfying
$$(C_1 \tilde{\otimes} C_2)(u \otimes v) = C_1u \otimes C_2v, ~~ u \in H_1, v \in H_2.$$
In fact $C_1 \tilde{\otimes} C_2 \in S_{\infty}(H)$ with $\VERT C_1 \tilde{\otimes} C_2\VERT_{\infty} = \VERT C_1 \VERT_{\infty} \VERT C_2 \VERT_{\infty}$. Moreover, if $C_1 \in S_p(H_1)$ and $C_2 \in S_p(H_2)$, for $p \geq 1$, then $C_1 \tilde{\otimes} C_2 \in S_{p}(H)$ with $\VERT C_1 \tilde{\otimes} C_2\VERT_{p} = \VERT C_1 \VERT_{p} \VERT C_2 \VERT_{p}$. For more details we refer to Chapter 8 of ~\citet{weidmann1980}. In the sequel we denote the Hilbert-Schmidt inner product on $S_2(H)$ with $H = H_1 \otimes H_2$ as $\langle \cdot ,\cdot\rangle_{HS}$ and that of $S_2(H_1)$ and $S_2(H_2)$ as $\langle \cdot,\cdot \rangle_{S_2(H_1)}$ and $\langle \cdot,\cdot \rangle_{S_2(H_2)}$ respectively.

\subsection{Measuring separability} \label{sec21}
We consider random elements $X$ in the Hilbert space $H$ with $\E\|X\|^4< \infty$  {[see Chapter 7 in \citet{hsingeubank2015} for more details on 
Hilbert space valued random variables].} Then the covariance operator of $X$ is defined as
$C := \E\left[(X-\E X)\otimes_{o} (X- \E X)\right]$, where 
for $ f,g\in H$ the operator $f \otimes_{o} g: H \to H $ is defined by
$$
(f \otimes_o g)h = \langle h,g \rangle f  \ \mbox {for all} \ h \in H.
$$
Under the assumption $\E\|X\|^4< \infty$, we have $C \in S_2(H)$. We also assume $\VERT C\VERT_2 \neq 0$, which essentially means the random variable $X$ is non-degenerate. 
To test separability we consider the hypothesis
\begin{equation} \label{H0A}
H_0 : C = C_1 {\widetilde{\otimes}} C_2   \  \mbox{for some} \  C_1 \in S_2(H_1)  \  \mbox{and} \ C_2 \in S_2(H_2).
\end{equation}
Our approach is based on an approximation of the operator $C$ by a separable operator $C_1 \tilde{\otimes} C_2$
with respect to the norm $\VERT \cdot \VERT_2$. 
 Ideally,
we are looking for
\begin{equation} \label{mindist}
D:= \inf \Big \{ \VERT C - C_1 \tilde{\otimes} C_2\VERT_2^2 ~\Big|
~ C_1 \in S_2(H_1), C_2 \in  S_2(H_2)\Big \}~,
\end{equation}
such that the hypothesis of separability in \eqref{H0A} can be rewritten in terms of the distance $D$, that is
\begin{equation} \label{h0equiv}
H_0 : D=0 ~~ \mbox{versus} ~~H_1 : D >  0~.
\end{equation}
It turns out that it is difficult to express $D$ explicitly in terms of the covariance operator $C$. For this reason we proceed in a slightly different way in two steps. 
 {First we fix an operator $C_1^* \in S_2(H_1)$  and
determine
\begin{equation} \label{mindistc1}
D_{C_1^*} := \inf \Big \{ \VERT C - C_1^* \tilde{\otimes} C_2\VERT_2^2 ~\Big|
~ C_2 \in  S_2(H_2)\Big \}~,
\end{equation}
that is we are minimizing
$\VERT C - C_1^* \tilde{\otimes} C_2\VERT_2^2$ with respect to second factor $C_2$ of the tensor product. 
In particular  we will  show that the infimum is in fact a minimum and derive an explicit expression for $D_{C_1^*}$ and its minimizer.
Instead of working with the distance $D$ in \eqref{mindist} we suggest to  estimate  an appropriate distance from the family 
$$\{ D_{C_1^*} | C_1^* \in  S_1(H_1) \}.
$$
For this purpose note that 
for a  given covariance operator $C \in S_2(H)$ and 
$C_1^* \in S_2(H_1)$  the corresponding distance $D_{C_1^*}$ is in general positive. 
However, we also show in the following 
that  $C$ is separable, i.e. 
$C = C_1 \widetilde{\otimes} C_2 $,  if and only if 
the corresponding  minimum  $D_{C_1}$  vanishes. 
Thus, if we are able to estimate $D_{C_1}$ (for the unknown operator $C_1$),  we can 
test the hypothesis \eqref{h0equiv}, by constructing a test for the hypotheses 
$H_0:D_{C_1} = 0 ~\mbox{ versus } ~H_1:D_{C_1} > 0. $
We explain below that this is in fact possible.} 
\\
 For this purpose we have to introduce some additional notation and have to prove several auxiliary results. The main statement is given in Theorem \ref{lem:dist} (whose formulation also requires the new notation). First, consider the  bounded linear operator
 $T_1:S_2(H) \times S_2(H_1) \mapsto S_2(H_2)$     defined by
\begin{equation} \label{t1def} T_1(A \tilde{\otimes} B,C_1) = \langle A,C_1\rangle_{S_2(H_1)} B
\end{equation}
 for all $C_1 \in S_2(H_1)$. Similarly, let $T_2: S_2(H) \times S_2(H_2) \to S_2(H_1)$ be the bounded linear operator defined by
 \begin{equation} \label{t2def} T_2(A \tilde{\otimes} B,C_2) = \langle B,C_2\rangle_{S_2(H_2)}A
\end{equation}
for all $C_2 \in S_2(H_2)$.
 {The proof of the following two auxiliary results can be found in Section \ref{a2} and \ref{a3} of the Appendix.} 

\begin{prop}
\label{prop:t1}
The operators $T_1$ and $T_2$ are
 well-defined, bi-linear and continuous with 
\begin{eqnarray}
\label{T1}
\langle B, T_1(C,C_1) \rangle_{S_2(H_2)} = \langle C, C_1 \tilde{\otimes} B  \rangle_{HS},\\
\label{T2}
\langle A, T_2(C,C_2) \rangle_{S_2(H_1)} = \langle C, A\tilde{\otimes} C_2  \rangle_{HS}.
\end{eqnarray}
for all $A, { C_1} \in S_2(H_1)$, $B, { C_2} \in S_2(H_2)$ { and $C \in S_2(H)$}. 
\end{prop}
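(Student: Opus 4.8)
The plan is to \emph{define} $T_1$ through the identity \eqref{T1} rather than through \eqref{t1def}: for fixed $C \in S_2(H)$ and $C_1 \in S_2(H_1)$, the map $B \mapsto \langle C, C_1 \tilde{\otimes} B\rangle_{HS}$ is a bounded linear functional on $S_2(H_2)$, so the Riesz representation theorem produces a unique element of $S_2(H_2)$ that I take to be $T_1(C,C_1)$. This route disposes of the only genuinely delicate point, namely well-definedness: the elementary-tensor prescription \eqref{t1def} pins $T_1$ down only on simple tensors $A \tilde{\otimes} B$, and a given $C \in S_2(H)$ has no unique expansion into such tensors, whereas the Riesz definition is representation-free by construction. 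The symmetric argument with \eqref{T2} then handles $T_2$.

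The first ingredient I would record is the factorisation of the Hilbert--Schmidt inner product,
$$\langle A_1 \tilde{\otimes} B_1, A_2 \tilde{\otimes} B_2 \rangle_{HS} = \langle A_1, A_2\rangle_{S_2(H_1)}\, \langle B_1, B_2 \rangle_{S_2(H_2)}.$$
Evaluating $\langle \cdot,\cdot\rangle_{HS}$ on the orthonormal basis $\{e_i \otimes f_j\}$ of $H$ built from orthonormal bases $\{e_i\}$ of $H_1$ and $\{f_j\}$ of $H_2$, and using $(A \tilde{\otimes} B)(e_i \otimes f_j) = A e_i \otimes B f_j$, the resulting double sum factorises, with Fubini justified by absolute convergence via Cauchy--Schwarz. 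Taking $A_1=A_2$ and $B_1=B_2$ recovers the norm identity $\VERT C_1 \tilde{\otimes} C_2\VERT_2 = \VERT C_1\VERT_2 \VERT C_2\VERT_2$ quoted above.

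With this in hand, boundedness of the functional is immediate: by Cauchy--Schwarz and the norm identity, $|\langle C, C_1 \tilde{\otimes} B\rangle_{HS}| \le \VERT C\VERT_2 \VERT C_1\VERT_2 \VERT B\VERT_2$, so Riesz delivers both the existence of $T_1(C,C_1)$ satisfying \eqref{T1} and the estimate $\VERT T_1(C,C_1)\VERT_{S_2(H_2)} \le \VERT C\VERT_2 \VERT C_1\VERT_2$, which is exactly joint continuity. Bilinearity follows because the Riesz representer depends linearly on its defining functional: linearity in $C$ comes from linearity of $C \mapsto \langle C, \cdot\rangle_{HS}$, and linearity in $C_1$ from linearity of $C_1 \mapsto C_1 \tilde{\otimes} B$.

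Finally I would check consistency with \eqref{t1def} on elementary tensors. For $C = A \tilde{\otimes} B'$ and arbitrary $B \in S_2(H_2)$, the factorisation and symmetry of the (real) inner product give $\langle A \tilde{\otimes} B', C_1 \tilde{\otimes} B\rangle_{HS} = \langle A, C_1\rangle_{S_2(H_1)} \langle B', B\rangle_{S_2(H_2)} = \langle B,\, \langle A, C_1\rangle_{S_2(H_1)} B'\rangle_{S_2(H_2)}$, whence $T_1(A \tilde{\otimes} B', C_1) = \langle A, C_1\rangle_{S_2(H_1)} B'$, matching \eqref{t1def}; since finite sums of elementary tensors are dense in $S_2(H)$, this Riesz operator is the unique continuous extension of \eqref{t1def}. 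The argument for $T_2$ is identical after swapping the two tensor factors and using \eqref{T2}. The main obstacle here is conceptual rather than computational --- that \eqref{t1def} by itself does not obviously determine an operator on all of $S_2(H)$ --- and the weak definition via Riesz representation is precisely what circumvents it; the factorisation, the bound and bilinearity are then routine.
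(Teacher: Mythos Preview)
Your proof is correct and takes a genuinely different route from the paper's. The paper works in the opposite direction: it takes \eqref{t1def} as the definition on the dense subspace $\mathcal{D}_0$ of finite sums of elementary tensors (Lemma~\ref{lem:D0}), verifies \eqref{T1} for $E \in \mathcal{D}_0$ by direct computation, then uses \eqref{T1} together with a duality argument---expressing $\VERT T_1(E,C_1)\VERT_1$ as a supremum over $B$ with $\VERT B\VERT_\infty \le 1$---to obtain the bound $\VERT T_1(E,C_1)\VERT_2 \le \VERT C_1\VERT_2 \VERT E\VERT_2$, and finally extends $T_1(\cdot,C_1)$ to all of $S_2(H)$ by continuity. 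You instead define $T_1$ globally via Riesz representation, which delivers well-definedness, the norm bound, and bilinearity in one stroke; density of elementary tensors is then only needed to identify this Riesz operator with the continuous extension of the prescription \eqref{t1def}. Your route is more economical and cleanly sidesteps the point of checking that the linear extension of \eqref{t1def} to $\mathcal{D}_0$ is representation-independent (a point the paper handles only implicitly, since \eqref{eq:t1d0} shows the value is determined by $E$ rather than by the chosen decomposition); the paper's route is more constructive but requires the density lemma as prerequisite machinery before anything else can be said.
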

 {
While the previous result clarifies the existence of the operators $T_1$ and $T_2$, the next proposition provides a property of the operator $T_1$, which is essential for the proof of the main  result of this section.
}
%
%

\begin{prop}
\label{prop:t1alt}
For any $C \in S_2(H)$ and $C_1 \in S_2(H_1)$, we have
$$\langle C, C_1 \tilde{\otimes} T_1(C,C_1)\rangle_{HS} = \VERT T_1(C,C_1)\VERT_2^2 .$$
\end{prop}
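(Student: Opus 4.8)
The plan is to derive the identity from a single, well-chosen substitution into the adjoint-type relation \eqref{T1} already established in Proposition \ref{prop:t1}. Recall that \eqref{T1} asserts
\[
\langle B, T_1(C,C_1) \rangle_{S_2(H_2)} = \langle C, C_1 \tilde{\otimes} B \rangle_{HS}
\]
for every $B \in S_2(H_2)$, every $C_1 \in S_2(H_1)$ and every $C \in S_2(H)$. The right-hand side of the claimed equality is precisely the right-hand side of \eqref{T1} evaluated at the particular choice $B = T_1(C,C_1)$, so the whole proof amounts to justifying and carrying out that substitution.

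First I would check that this substitution is admissible, that is, that $T_1(C,C_1)$ is genuinely an element of $S_2(H_2)$ and hence a legitimate argument to feed into \eqref{T1}. This is immediate from Corollary \ref{cor:t1ineq}, which gives $\VERT T_1(C,C_1)\VERT_2 \leq \VERT C\VERT_2 \VERT C_1\VERT_2 < \infty$, so indeed $T_1(C,C_1) \in S_2(H_2)$.

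Next, setting $B = T_1(C,C_1)$ in \eqref{T1} produces
\[
\langle T_1(C,C_1), T_1(C,C_1) \rangle_{S_2(H_2)} = \langle C, C_1 \tilde{\otimes} T_1(C,C_1) \rangle_{HS}.
\]
Finally I would identify the left-hand side with the squared Hilbert-Schmidt norm: since $\langle \cdot , \cdot\rangle_{S_2(H_2)}$ is exactly the inner product inducing $\VERT \cdot\VERT_2$ on $S_2(H_2)$, one has $\langle T_1(C,C_1), T_1(C,C_1)\rangle_{S_2(H_2)} = \VERT T_1(C,C_1)\VERT_2^2$, which is the assertion.

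I do not expect any real obstacle here: all the substance is carried by Proposition \ref{prop:t1}, and the only point requiring a word of justification is that the substituted argument lies in the correct space, supplied by Corollary \ref{cor:t1ineq}. Conceptually, the statement records that $T_1(C,C_1)$ represents the bounded linear functional $B \mapsto \langle C, C_1 \tilde{\otimes} B\rangle_{HS}$ in the sense of Riesz, so that pairing this representer with itself reproduces its squared norm. This is exactly the relation that will subsequently allow one to recognize $C_1 \tilde{\otimes} T_1(C,C_1)$ as the orthogonal projection of $C$ onto the relevant subspace, and thereby to exhibit the explicit minimizer in the distance problem \eqref{mindistc1}.
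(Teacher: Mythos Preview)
Your argument is correct and in fact cleaner than the paper's own proof. The paper establishes the identity by expanding both sides explicitly for $C = \sum_{n=1}^N A_n \tilde{\otimes} B_n \in \mathcal{D}_0$, checking that the two resulting double sums coincide, and then invoking density of $\mathcal{D}_0$ in $S_2(H)$ together with continuity of both sides to extend the equality to arbitrary $C$. You instead observe that the identity is nothing more than \eqref{T1} with the particular choice $B = T_1(C,C_1)$, and since Proposition~\ref{prop:t1} has already carried out the density--continuity extension to obtain \eqref{T1} for all $C \in S_2(H)$, there is no need to repeat that argument here.

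What your approach buys is economy and conceptual clarity: it makes explicit that Proposition~\ref{prop:t1alt} is an immediate corollary of \eqref{T1} rather than an independent computation, and it highlights the Riesz-representation interpretation you mention. The paper's direct verification on $\mathcal{D}_0$ is self-contained but essentially duplicates work already done in the proof of Proposition~\ref{prop:t1}. One minor remark: you do not even need to invoke Corollary~\ref{cor:t1ineq} to justify that $T_1(C,C_1) \in S_2(H_2)$, since this is already the declared codomain of the map $T_1$ in \eqref{t1def} and Proposition~\ref{prop:t1}; citing the corollary does no harm, but the membership is automatic.
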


\begin{thm}
\label{lem:dist}
For each $C \in S_2(H)$ and any fixed $C_1^* \in S_2(H_1)$ the distance 
\begin{equation}\label{minop1}
D_{C_1^*}(C_2) =\VERT C-C_1^* \tilde{\otimes} C_2 \VERT_2
\end{equation}
is minimized at
\begin{equation}\label{minop}
\widetilde{C}_2 = \frac{T_1(C,C_1^*)}{\VERT C_1^* \VERT_2^2}.
\end{equation}
Moreover, the minimum distance in \eqref{minop1} is given by
\begin{equation} \label{mindist2}
 D_{C_1^*} = \VERT C \VERT_2^2 - \frac{\VERT T_1(C,C_1^*)\VERT^2_2}{\VERT C_1^* \VERT_2^2}.
\end{equation}
In particular  $D_{C_1^*}$ is zero if and only if $C = C_1^* \tilde{\otimes} C_2$ for some 
$C_2 \in S_2(H_2)$.

\end{thm}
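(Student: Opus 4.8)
The plan is to treat this as a least-squares projection problem in the Hilbert space $S_2(H_2)$: for fixed $C_1 \neq 0$ the map $C_2 \mapsto C_1 \tilde{\otimes} C_2$ is linear and, by the identity $\VERT C_1 \tilde{\otimes} C_2 \VERT_2 = \VERT C_1 \VERT_2 \VERT C_2 \VERT_2$ recalled in Section \ref{sec2}, equals $\VERT C_1 \VERT_2$ times an isometry, so its image is a closed subspace and the minimizer is the orthogonal projection of $C$ onto it. Since minimizing the distance in \eqref{minop1} is equivalent to minimizing its square, I would first expand
\begin{align*}
\VERT C - C_1 \tilde{\otimes} C_2 \VERT_2^2 = \VERT C \VERT_2^2 - 2\langle C, C_1 \tilde{\otimes} C_2 \rangle_{HS} + \VERT C_1 \tilde{\otimes} C_2 \VERT_2^2.
\end{align*}
The cross term is rewritten via the defining relation \eqref{T1} of $T_1$ with $B = C_2$, giving $\langle C, C_1 \tilde{\otimes} C_2 \rangle_{HS} = \langle C_2, T_1(C,C_1) \rangle_{S_2(H_2)}$, and the last term equals $\VERT C_1 \VERT_2^2 \VERT C_2 \VERT_2^2$ by the tensor-norm identity. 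Thus the objective is a quadratic in the single Hilbert-space variable $C_2$ with strictly positive leading coefficient $\VERT C_1 \VERT_2^2$.

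Next I would complete the square. Writing $\alpha := \VERT C_1 \VERT_2^2 > 0$ and $T := T_1(C,C_1)$, a routine rearrangement yields, for \emph{every} $C_2 \in S_2(H_2)$,
\begin{align*}
\VERT C - C_1 \tilde{\otimes} C_2 \VERT_2^2 - \Big( \VERT C \VERT_2^2 - \frac{\VERT T \VERT_2^2}{\alpha} \Big) = \alpha \, \VERT C_2 - \alpha^{-1} T \VERT_2^2 \geq 0.
\end{align*}
This single identity simultaneously shows that the infimum in \eqref{mindistc1} is attained, that its value is the $D(C_1)$ of \eqref{mindist2}, and that the minimizer $\widetilde{C}_2 = T_1(C,C_1)/\VERT C_1 \VERT_2^2$ of \eqref{minop} is the unique point achieving equality (because $\alpha > 0$). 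As an independent cross-check, the value of the objective at $\widetilde{C}_2$ can be recomputed directly: the cross term collapses via Proposition \ref{prop:t1alt} to $\langle C, C_1 \tilde{\otimes} T_1(C,C_1) \rangle_{HS} = \VERT T_1(C,C_1) \VERT_2^2$, while the quadratic term again uses the tensor-norm identity, and the two combine to $\VERT C \VERT_2^2 - \VERT T_1(C,C_1) \VERT_2^2 / \VERT C_1 \VERT_2^2$.

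For the concluding equivalence I would argue directly from $D(C_1) \geq 0$ together with the attainment just established. If $D(C_1) = 0$, then $\VERT C - C_1 \tilde{\otimes} \widetilde{C}_2 \VERT_2 = 0$, so $C = C_1 \tilde{\otimes} \widetilde{C}_2$ is separable with first factor $C_1$. Conversely, if $C = C_1 \tilde{\otimes} C_2$ for some $C_2 \in S_2(H_2)$, then this $C_2$ makes the objective vanish, so $0 \leq D(C_1) \leq 0$ and hence $D(C_1) = 0$.

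I do not anticipate a serious obstacle: the argument is the orthogonal-projection theorem specialized to the family $\{ C_1 \tilde{\otimes} C_2 : C_2 \in S_2(H_2)\}$, and all the analytic content—well-definedness and continuity of $T_1$, the adjoint relation \eqref{T1}, and the quadratic identity of Proposition \ref{prop:t1alt}—has already been supplied. The only points requiring care are the standing assumption $C_1 \neq 0$, which is needed both for the denominator $\VERT C_1 \VERT_2^2$ and for the strict convexity that forces uniqueness of $\widetilde{C}_2$, and the observation that the completing-the-square identity holds for all $C_2$, which is what promotes the infimum in \eqref{mindistc1} to a genuine minimum.
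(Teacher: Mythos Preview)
Your proof is correct and rests on the same orthogonal-projection idea as the paper's, but the execution differs in a useful way. The paper introduces $\widetilde{C}_2$ from the outset, writes the Pythagorean-type decomposition
\[
\VERT C - C_1 \tilde{\otimes} C_2\VERT_2^2 = \VERT C - C_1 \tilde{\otimes} \widetilde{C}_2\VERT_2^2 + \VERT C_1 \tilde{\otimes} (\widetilde{C}_2 - C_2)\VERT_2^2 + 2\langle C - C_1 \tilde{\otimes} \widetilde{C}_2,\, C_1 \tilde{\otimes} (\widetilde{C}_2 - C_2)\rangle_{HS},
\]
and then spends the bulk of the argument showing the cross term vanishes, which requires both the adjoint relation \eqref{T1} and Proposition~\ref{prop:t1alt}. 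You instead use \eqref{T1} once to rewrite the single cross term $\langle C, C_1 \tilde{\otimes} C_2\rangle_{HS}$ and then complete the square in $C_2$; this yields the minimizer, the minimum value, and uniqueness in one stroke, and Proposition~\ref{prop:t1alt} is relegated to an optional cross-check rather than a load-bearing step. The trade-off is that the paper's route makes the orthogonality $\langle C - C_1 \tilde{\otimes} \widetilde{C}_2,\, C_1 \tilde{\otimes} B\rangle_{HS}=0$ explicit, which is conceptually the projection characterization, whereas your argument is more computational but leaner. Both handle the final equivalence identically.
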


\begin{proof}
We write
\begin{align*}
D_{C_1^*}(C_2) = \VERT C - C_1^* \tilde{\otimes} C_2\VERT_2^2 
= &\VERT C - C_1^* \tilde{\otimes} \widetilde{C}_2 \VERT_2^2 + \VERT C_1^* \tilde{\otimes} \widetilde{C}_2 -C_1^* \tilde{\otimes} C_2\VERT_2^2 \\
&+2\langle C - C_1^* \tilde{\otimes} \widetilde{C}_2 ,  C_1^* \tilde{\otimes} \widetilde{C}_2 -C_1^* \tilde{\otimes} C_2\rangle_{HS}.
\end{align*}
For the last term we obtain from \eqref{minop}
\begin{align*}
\langle C - C_1^* \tilde{\otimes} \widetilde{C}_2 ,  C_1^* \tilde{\otimes} \widetilde{C}_2 -C_1^* \tilde{\otimes} C_2\rangle_{HS}
= &\langle C, C_1^* \tilde{\otimes} \widetilde{C}_2\rangle_{HS} - \VERT C_1^* \tilde{\otimes} \widetilde{C}_2\VERT_2^2 \\
& - \langle C, C_1^* \tilde{\otimes} C_2 \rangle_{HS} + \langle C_1^* \tilde{\otimes}  \widetilde{C}_2, C_1^* \tilde{\otimes} C_2\rangle_{HS}\\
= &\frac{1}{\VERT C_1^* \VERT_2^2}\langle C,  C_1^* \tilde{\otimes} T_1(C,C_1^*)\rangle_{HS} - \frac{\VERT T_1(C,C_1^*)\VERT_2^2}{\VERT C_1^* \VERT_2^2} \\
&- \langle C, C_1^* \tilde{\otimes} C_2 \rangle_{HS} + \langle C_2,T_1(C,C_1^*) \rangle_{HS},
\end{align*}
which  is zero by \eqref{T1} and Proposition \ref{prop:t1alt}. Therefore we have
 for all $C_2 \in S_2(H_2)$
$$\VERT C - C_1^* \tilde{\otimes} C_2\VERT_2^2 \geq \VERT C - C_1^* \tilde{\otimes} \widetilde{C}_2\VERT_2^2 $$
which proves the  first assertion of  Theorem \ref{lem:dist}.

For a proof of the representation \eqref{mindist2} we substitute 
 the operator $\widetilde{C}_2$ defined in \eqref{minop} for $C_2$ in the expression of $D_{C_1^*}(C_2)$ and obtain
\begin{align*}
D_{C_1^*} = &
D_{C_1^*}(\widetilde{C}_2) =  \VERT C - C_1^* \tilde{\otimes} \widetilde{C}_2 \VERT_2^2 
=  \langle C - C_1^* \tilde{\otimes} \widetilde{C}_2, C - C_1^* \tilde{\otimes} \widetilde{C}_2\rangle_{HS}\\
= & \VERT C \VERT_2^2 + \VERT C_1^* \tilde{\otimes} \widetilde{C}_2\VERT_2^2 - 2\langle C, C_1^* \tilde{\otimes} \widetilde{C}_2\rangle_{HS}\\
= & \VERT C \VERT_2^2 + \frac{\VERT T_1(C,C_1^*) \VERT_2^2}{\VERT C_1^* \VERT_2^2} - \frac{2}{\VERT C_1^* \VERT_2^2}\langle C, C_1^* \tilde{\otimes} T_1(C,C_1^*)\rangle_{HS}.
\end{align*}
Then the second assertion follows from Proposition \ref{prop:t1alt}. 

For the last part, now assume that 
$C = C_1^* \tilde{\otimes} C_2$ for some $C_2 \in S(H_2)$, then \eqref{t1def} implies
$$
\VERT T_1 (C, C_1^*)\VERT^2_2 = ( \langle C_1^*, C_1^* \rangle_{S_2(H_1)})^2 \VERT  C_2 \VERT^2_2 =\VERT  C_1^* \VERT_2^4\VERT  C_2 \VERT^2_2
$$
{
and observing the representation \eqref{mindist2}
we obtain  $D_{C_1^*} =0$. Conversely, if $D_{C_1^*} = 0$, we have $\VERT C - C_1^* \tilde{\otimes} \widetilde{C}_2 \VERT_2= 0$ where $\widetilde{C}_2  = {T_1(C,C_1^*)}/{\VERT C_1 \VERT_2^2}
$. Consequently $C = C_1^* \tilde{\otimes} \widetilde{C}_2$. 
}
\end{proof}

{\begin{Remark} \label{rem21}
{\rm 
By  Theorem \ref{lem:dist} we can construct a test for the hypothesis 
$$
H_0: D_{C_1^*} = 0
$$
for any given $C_1^* \in S_2(H_1)$
by estimating the quantity in \eqref{mindist2}. 
If the covariance operator $C$ is not separable it follows that $D_{C_1^*} > 0$ for all
$C_1^* \in S_2 (H_1)$. If $C$ is 
in  fact separable (i.e., the null hypothesis is true) such that   $C = C_1 \widetilde{\otimes} C_2$ for some $C_1$ and $C_2$ we have $D_{C_1}=D_{C_1}(C_2)=0$.  
Interestingly we can 
obtain  $C_1$ from $C$ up to a multiplicative constant
using the operator $T_2$ defined in \eqref{t2def}.
More precisely, let  $\Delta$ be any fixed element of $S_2(H_2)$ and    note that under the null hypothesis of separability we have $T_2(C,\Delta) = \langle C_2,\Delta \rangle_{S_2(H_2)}C_1$. As the minimum distance in \eqref{mindist2} is 
invariant with respect to scalar multiplication of $C_1^*$ it follows  for this choice
\begin{equation} \label{mindistopt}
D_0 := D_{C_1}=  D_{T_2(C,\Delta)} =\VERT C \VERT_2^2 - \frac{\VERT T_1(C,T_2(C,\Delta))\VERT^2_2}{\VERT T_2(C,\Delta) \VERT_2^2}.
\end{equation}
Note that $D_0\geq 0 $ and  $D_0$ vanishes if and only if $C$ is separable. Thus we can construct a consistent test of the hypothesis \eqref{H0A} via a suitable estimate of the operator $C$ in \eqref{mindistopt}. This program is carefully carried out in Section \ref{sec3}.
}
\end{Remark}

\begin{Remark} \label{rem22}
{\rm 
Note that the representation \eqref{mindistopt} involves only norms  of operators and as a consequence, when it comes to estimation,  we do not have to store the complete estimate of the covariance kernel. We make this point more precise in Remark \ref{rem4}, where we discuss  the problem of estimating $D_0$ in the context of Hilbert-Schmidt integral operators.
}
\end{Remark}
}

\subsection{Hilbert-Schmidt integral operators}\label{sec22} 

 An important  case for applications 
is the set $H := L^2\left(S \times T\right)$ of all real-valued square integrable functions defined on  $S \times T$, where $S \subset \R^p$, $T\subset \R^q$ are bounded measurable sets. 
If the covariance operator $C$ of the random variable $X$  is a Hilbert-Schmidt operator  it follows from Theorem 6.11 in \citet{weidmann1980} that  there exists a kernel
$c\in  L^2 \big ( (S \times T) \times (S \times T) \big )$
such that $C$  can be characterized 
as an integral operator, i.e.
$$Cf(s,t) = \int_S \int_T c(s,t,s',t')f(s',t')ds'dt', ~~~
f \in L^2 (S \times T) ,
$$
almost everywhere on $ S \times T$. Moreover the kernel 
is given by the covariance kernel of $X$, that is  $c(s,t,s',t') = \text{Cov}\left[X(s,t),X(s',t')\right]$, and  the space $H$ can be identified with the tensor product of $H_1 = L^2(S)$ and $H_2 = L^2(T)$.

Similarly, if $C_1$ and $C_2$ are Hilbert-Schmidt operators on $L^2(S)$ and $L^2(T)$ respectively  there exists symmetric kernels $c_1 \in L^2(S \times S)$ and $c_2 \in L^2(T \times T)$ such that,
\begin{eqnarray*}
C_1f(s) &=& \int_S c_1(s,s')f(s')ds' ,~ f \in H_1,\\
C_2g(t) &=& \int_T c_2(t,t')g(t')dt' ,~g \in H_2
\end{eqnarray*}
almost everywhere on $S$ and $T$, respectively.   The following result shows that in this case the operators $T_1$ and $T_2$ defined  by \eqref{t1def} and \eqref{t2def}, respectively, are also {Hilbert-Schmidt  
integral} operators.
 {The proof can be found in Section \ref{a4} of the Appendix
and requires that the sets $S$ and $T$ are bounded.}

\medskip
\begin{prop}
\label{prop:t1kernel}
If $C$ and $C_1$ are integral operators with {continuous} kernels $c \in L^2\big ( (S \times T)  \times (S \times T)\big )$ and $c_1 \in L^2(S \times S)$, then $T_1(C,C_1)$ is an integral operator with kernel given by
\begin{equation}\label{kdef}
  k(t,t') = \int_S \int_{S} c(s,t,s't')c_1(s,s')ds ds'.
\end{equation}
An analog result is true for the operator $T_2$.
\end{prop}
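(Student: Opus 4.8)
The plan is to use the defining identity \eqref{T1} for $T_1$ together with the correspondence between Hilbert--Schmidt operators and their $L^2$ kernels. Throughout I identify $H=H_1\otimes H_2$ with $L^2(S\times T,\R)$, $H_1$ with $L^2(S,\R)$ and $H_2$ with $L^2(T,\R)$, and I use that, by Theorem 6.11 in \citet{weidmann1980}, every operator $B\in S_2(H_2)$ is an integral operator with a kernel $b\in L^2(T\times T,\R)$; moreover the Hilbert--Schmidt inner product of two such operators equals the $L^2$ inner product of their kernels. To identify $T_1(C,C_1)$ it therefore suffices to compute $\langle B,T_1(C,C_1)\rangle_{S_2(H_2)}$ for every $B\in S_2(H_2)$ and to match it with an integral against the candidate kernel $k$ from \eqref{kdef}.

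First I would verify that for $B\in S_2(H_2)$ with kernel $b$ the tensor product $C_1\tilde{\otimes} B$ is the integral operator on $L^2(S\times T)$ with kernel $(s,t,s',t')\mapsto c_1(s,s')\,b(t,t')$. This follows from the characterising relation $(C_1\tilde{\otimes} B)(u\otimes v)=C_1u\otimes Bv$ by computing the action on product functions $u\otimes v$ and extending to all of $L^2(S\times T)$ by linearity and density of finite sums of such products. Combining this with the kernel representation of $\langle\cdot,\cdot\rangle_{HS}$, the identity \eqref{T1} gives
$$
\langle B,T_1(C,C_1)\rangle_{S_2(H_2)}=\langle C,C_1\tilde{\otimes} B\rangle_{HS}
=\int_{S\times T}\!\int_{S\times T} c(s,t,s',t')\,c_1(s,s')\,b(t,t')\,ds\,dt\,ds'\,dt'.
$$
Applying Fubini's theorem to integrate over $s$ and $s'$ first, the inner double integral is exactly $k(t,t')$ as defined in \eqref{kdef}, so the right-hand side equals $\int_T\int_T k(t,t')\,b(t,t')\,dt\,dt'=\langle B,K\rangle_{S_2(H_2)}$, where $K$ is the integral operator with kernel $k$. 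Since $B\in S_2(H_2)$ is arbitrary, this yields $T_1(C,C_1)=K$.

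The step requiring the most care is the Fubini interchange, which needs absolute integrability of $(s,t,s',t')\mapsto c(s,t,s',t')c_1(s,s')b(t,t')$ over the full product domain. Because $S$ and $T$ are bounded and $c\in L^2\big((S\times T)^2\big)$, $c_1\in L^2(S\times S)$, $b\in L^2(T\times T)$, this integrability follows from the Cauchy--Schwarz inequality. The same estimate, applied to the inner integral alone, shows that $k(t,t')$ is finite for almost every $(t,t')$ and, after integrating the pointwise bound $|k(t,t')|^2\le\big(\int_S\int_S|c(s,t,s',t')|^2\,ds\,ds'\big)\,\|c_1\|_{L^2(S\times S)}^2$ over $(t,t')$, that $k\in L^2(T\times T)$ with $\|k\|_{L^2}\le\|c\|_{L^2}\|c_1\|_{L^2}$; this is consistent with the bound $\VERT T_1(C,C_1)\VERT_2\le\VERT C\VERT_2\VERT C_1\VERT_2$ of Corollary \ref{cor:t1ineq} and confirms that $K$ is genuinely Hilbert--Schmidt, so the identification $T_1(C,C_1)=K$ is legitimate. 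The analogous statement for $T_2$ follows from the symmetric argument based on \eqref{T2} and \eqref{t2def}.
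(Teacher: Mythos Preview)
Your argument is correct. You exploit the duality identity \eqref{T1} directly: writing the Hilbert--Schmidt inner product on $S_2(H)$ as the $L^2$ inner product of kernels, and observing that $C_1\tilde{\otimes}B$ has kernel $c_1(s,s')b(t,t')$, you obtain $\langle B,T_1(C,C_1)\rangle_{S_2(H_2)}=\langle b,k\rangle_{L^2(T\times T)}$ for every $B\in S_2(H_2)$, and conclude by nondegeneracy once you have checked $k\in L^2(T\times T)$. The Fubini step and the $L^2$ bound on $k$ are both cleanly justified by Cauchy--Schwarz, exactly as you indicate.

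The paper proceeds differently: it approximates $C$ by an element $C'=\sum_{n=1}^N A_n\tilde{\otimes}B_n$ of the dense set $\mathcal{D}_0$ with continuous kernels (Lemma~\ref{lem:ker_approx}), verifies the kernel formula for $T_1(C',C_1)$ directly from the defining relation \eqref{t1def}, and then controls $\VERT T_1(C,C_1)-K\VERT_2$ by a three-term triangle inequality. Your route is shorter and avoids the approximation machinery entirely; in particular it works for arbitrary $c\in L^2\big((S\times T)^2\big)$, whereas the paper's approximation lemma invokes Mercer's theorem and so implicitly uses continuity of $c$. The paper's approach, on the other hand, keeps the argument parallel to the earlier density-based proofs in Section~\ref{sec2} and makes the role of $\mathcal{D}_0$ explicit.
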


\medskip

Using the explicit formula for $T_1$ described in Proposition  \ref{prop:t1kernel} the minimum distance can be expressed in terms of the corresponding kernels of the operators, that is
\begin{eqnarray}
\nonumber
D_{C_1}  &=& 
\VERT C \VERT_2^2 - \frac{\VERT T_1(C,C_1)\VERT^2_2}{\VERT C_1 \VERT_2^2} \\
&=& \int_T\int_T\int_S\int_S c^2(s,t,s',t')dsds'dtdt'  
\label{eq:dker}\\
&&- \frac{\int_T\int_T\left[\int_S\int_S c(s,t,s't')c_1(s,s')dsds'\right]^2dtdt'}{\int_S\int_S c_1^2(s,s')dsds'}.
\nonumber
\end{eqnarray}

\section{Estimation and asymptotic  properties} \label{sec3}
\def\theequation{3.\arabic{equation}}
\setcounter{equation}{0}

{
Formally we estimate the minimum distance given in \eqref{mindist2} by plugging in estimators for $C$ and $C_1$ based on a sample $X_1, X_2, \dots, X_N$. The covariance operator $C$ is estimated by
\begin{equation}
\label{estgen}
\widehat{C}_N := \frac{1}{N}\sum_{i=1}^{N}\left[(X_i - \overline{X})\otimes_{o}(X_i - \overline{X})\right].
\end{equation}
}
{As pointed out in Remark \ref{rem21}
it is sufficient  to estimate the operator $C_1$ up to a multiplicative constant, due to the self-normalizing form of the second term of the minimum distance $D_{C_1} $. Let $\Delta$ be any fixed element of $S_2(H_2)$, 
recall  that under the null hypothesis of separability
$H_0: C = C_1 \otimes C_2$ we have $T_2(C,\Delta) = \langle C_2,\Delta \rangle_{S_2(H_2)}C_1$. 
Observing the representation \eqref{mindistopt}
we suggest to estimate (a multiple of) the operator $C_1$  by \begin{equation}
\label{estgenc1}
\widehat{C}_{1N} = T_2(\widehat{C}_N,\Delta).
\end{equation}
With this choice we  obtain the test statistic 
\begin{equation} \label{estDN}
\widehat D_N = \VERT \widehat{C}_N \VERT_2^2  - \frac{\VERT T_1(\widehat{C}_N,T_2(\widehat{C}_N,\Delta))\VERT^2_2}{\VERT T_2(\widehat{C}_N,\Delta) \VERT_2^2} .
\end{equation}
As this representation only involves norms 
 we do not have to store the complete estimate of the covariance kernel (see   Remark \ref{rem4} for a more detailed discussion of this property).

\subsection{Weak convergence} \label{sec31}
The main results of this section provide the asymptotic properties of the statistic $\widehat D_N$
under the null hypothesis of separability and the alternative.

\begin{thm}
\label{thm:null_dist}
Assume that ~$\E\| X \|_2^4 < \infty$  and the null hypothesis of separability holds. Then  we have 
\begin{align}
\nonumber
N\widehat{D}_N \stackrel{d}{\to} 
&\left\VERT \mathcal G - \frac{T_2(\mathcal G,\Delta)\widetilde{\otimes}T_1(C,T_2(C,\Delta))}{\VERT T_2(C,\Delta)\VERT_2^2} \right\VERT_2^2 - \frac{\VERT T_1(\mathcal G,T_2(C,\Delta)) - T_1(C,T_2(\mathcal G,\Delta)\VERT_2^2}{\VERT T_2(C,\Delta)\VERT_2^2} \\
\label{eq:null_dist}
&   =
\left\VERT \mathcal G - \frac{T_2(\mathcal G,\Delta)\widetilde{\otimes}C_2 }{ \langle C_2 ,\Delta \rangle_{S_2(H_2)}  } \right\VERT_2^2 
- 
\left \VERT
 \frac{ T_1(\mathcal G,C_1)}{\VERT C_1 \VERT_2 } -
\frac{
\langle C_1 ,
T_2(\mathcal G,\Delta) \rangle_{S_2(H_1)} C_2}{ \langle C_2 ,\Delta \rangle_{S_2(H_2)} \VERT C_1 \VERT_2 }
\right \VERT_2^2,
\end{align} 
where
~$\mathcal{G}$ is a centered Gaussian process with covariance operator 
\begin{equation}\label{limop}
\Gamma := \lim_{N \to \infty}\mbox{Var}(\sqrt{N}\Cn)
\end{equation}
\end{thm}
\begin{proof}
The equality in \eqref{eq:null_dist} follows by a direct calculation using  \eqref{t1def} and  \eqref{t1def}. 
For the proof of the first part define the mapping $\phi:S_2(H) \mapsto \mathbb{R}$  by 
\begin{equation*}
\phi(A) = \VERT A \VERT_2^2 {\VERT T_2(A,\Delta)\VERT_2^2} - {\VERT T_1(A,T_2(A,\Delta))\VERT_2^2}.
\end{equation*}
By Proposition 5 in \citet{dauxois1982asymptotic}  the random variable $\sqrt{N}(\Cn - C)$ converges in distribution to a centered Gaussian random element $\mathcal{G}$  with variance 
\eqref{limop}
 in $S_2(H)$ with respect to Hilbert-Schmidt topology
 and we will first derive  the asymptotic distribution of $\phi(\widehat{C}_N) - \phi(C)$
using the functional delta method as described in Section 20.1 in \citet{van2000asymptotic}. For this purpose
we determine the derivatives of the map
$\phi_{C,G}: t \mapsto \phi(C+tG)$ for any fixed $G \in S_2(H)$.
Note that $\phi_{C,G} (t) $ is a polynomial in $t$. More precisely, we  have
\begin{eqnarray*}
 \phi(C+tG) &= &\VERT C+tG \VERT_2^2 {\VERT T_2(C+tG,\Delta)\VERT_2^2} - {\VERT T_1(C+tG,T_2(C+tG,\Delta))\VERT_2^2} \\
 &=& (a_0 + a_1t + a_2t^2)({c_0+c_1t+c_2t^2}) - ({b_0 + b_1t + b_2t^2 + b_3t^3 + b_4t^4}),
\end{eqnarray*}
where
\begin{eqnarray*}
a_0 &=& \VERT C \VERT_2^2~,~~a_1=2\left\langle C , G \right\rangle_{HS}~,~~ a_2 = \VERT G \VERT_2^2 ~\\
c_0 &=& 
\VERT T_2(C,\Delta)\VERT_2^2 ~,~~ c_1= 2\left\langle T_2(C,\Delta),T_2(G,\Delta)\right\rangle_{S_2(H_1)} ~,~~c_2=
\VERT T_2(G,\Delta)\VERT_2^2~
\end{eqnarray*}
and 
\begin{eqnarray*}
b_0 &=& \VERT T_1(C,T_2(C,\Delta)) \VERT_2^2~,~~ 
b_4 = \VERT T_1(G,T_2(G,\Delta)) \VERT_2^2 \\
b_1 &=&  2\big [\left\langle T_1(C,T_2(C,\Delta)),T_1(C,T_2(G,\Delta))\right\rangle_{S_2(H_2)} \\
&&~~~~
+ \left\langle T_1(C,T_2(C,\Delta)),T_1(G,T_2(C,\Delta))\right\rangle_{S_2(H_2)}\big ]\\
b_2 &=& \left[2\left\langle T_1(C,T_2(C,\Delta)),T_1(G,T_2(G,\Delta))\right\rangle_{S_2(H_2)} + \VERT T_1(C,T_2(G,\Delta)) \VERT_2^2\right.\\
&&~~~~\left.+ \VERT T_1(G,T_2(C,\Delta)) \VERT_2^2 + 2 \left\langle T_1(G,T_2(C,\Delta)),T_1(C,T_2(G,\Delta))\right\rangle_{S_2(H_2)}\right]\\
b_3 &=&2 \big [\left\langle T_1(G,T_2(G,\Delta)),T_1(C,T_2(G,\Delta))\right\rangle_{S_2(H_2)} \\
&&~~~~ + \left\langle T_1(G,T_2(G,\Delta)),T_1(G,T_2(C,\Delta))\right\rangle_{S_2(H_2)}\big]
\end{eqnarray*}
Now, under the null hypothesis of separability we have   for the quantity  in \eqref{mindistopt}  $D_{C_1}=0$ and $T_2(C,\Delta) = \langle C_2,\Delta \rangle_{S_2(H_2)}C_1$, which implies
\begin{eqnarray} \label{term0}
\phi( C + tG) |_{t=0} =  \phi (C) =
a_0c_0-b_0=0.
\end{eqnarray}
Similarly, using the fact that $C=C_1\widetilde{\otimes}C_2$  and $\frac{\VERT T_1(C,T_2(C,\Delta)) \VERT_2^2}{\VERT T_2(C,\Delta)\VERT_2^2}= \VERT C \VERT_2^2$  it  follows that
\begin{align*}
& {\left\langle T_1(C,T_2(C,\Delta)),T_1(C,T_2(G,\Delta))\right\rangle_{S_2(H_2)}}
={\langle C_2,\Delta \rangle_{S_2(H_2)} \left\langle T_1(C,C_1),T_1(C,T_2(G,\Delta))\right\rangle_{S_2(H_2)}} \\
& ~~~~~~~~~= {\langle C_2,\Delta \rangle_{S_2(H_2)} \langle C_1, C_1 \rangle_{S_2(H_1)} \left\langle C_2,T_1(C,T_2(G,\Delta))\right\rangle_{S_2(H_2)}}\\
& ~~~~~~~~~={\langle C_2,\Delta \rangle_{S_2(H_2)} \langle C_1, C_1 \rangle_{S_2(H_1)} \left\langle C_2,C_2\right\rangle_{S_2(H_2)} \langle C_1,T_2(G,\Delta) \rangle_{S_2(H_1)} }
\\
& ~~~~~~~~~=\VERT C \VERT_2^2 {\langle C_2,\Delta \rangle_{S_2(H_2)} \langle C_1,T_2(G,\Delta) \rangle_{S_2(H_1)} }
= \VERT C \VERT_2^2 { \langle T_2(C,\Delta),T_2(G,\Delta) \rangle_{S_2(H_1)} }
\end{align*}
and
\begin{align*}
  & {\left\langle T_1(C,T_2(C,\Delta)),T_1(G,T_2(C,\Delta))\right\rangle_{S_2(H_2)}}
=  {\langle C_2 ,\Delta \rangle_{S_2(H_2)}^2 \left\langle T_1(C,C_1),T_1(G,C_1)\right\rangle_{S_2(H_2)}}\\
& ~~~~~~~~~={\langle C_1,C_1 \rangle_{S_2(H_1)} \langle C_2,T_1(G,C_1)\rangle_{S_2(H_2)} \langle C_2 ,\Delta \rangle_{S_2(H_2)}^2}
= \langle C_2,T_1(G,C_1)\rangle_{S_2(H_2)} {\VERT T_2(C,\Delta)\VERT_2^2}  \\
& ~~~~~~~~~= \langle G, C_1 \widetilde{\otimes} C_2\rangle_{HS} 
{\VERT T_2(C,\Delta)\VERT_2^2} = \langle G, C \rangle_{HS} 
{\VERT T_2(C,\Delta)\VERT_2^2},
\end{align*}
which implies 
\begin{eqnarray} \label{term1}
\frac{d}{dt}\phi( C + tG) |_{t=0} = 
a_1c_0+a_0c_1-b_1= 0
\end{eqnarray}
(under the null hypothesis). 
Next we look at the second derivative and note 
the identities
\begin{align*}
 \VERT C \VERT_2^2{\VERT T_2(G,\Delta)\VERT_2^2}
 = & {\langle C_1,C_1 \rangle_{S_2(H_1)} \langle C_2,C_2\rangle_{S_2(H_2)} \langle T_2(G,\Delta),T_2(G,\Delta) \rangle_{S_2(H_1)}} \\
= &{\langle C_1,C_1 \rangle_{S_2(H_1)}  \langle T_2(G,\Delta)\widetilde{\otimes} C_2,T_2(G,\Delta) \widetilde{\otimes} C_2 \rangle_{HS}}
\\
= & 
\frac{\langle C_2,\Delta \rangle_{S_2(H_2)}^2 \langle C_1,C_1 \rangle_{S_2(H_1)}^2  \langle T_2(G,\Delta)\widetilde{\otimes} C_2,T_2(G,\Delta) \widetilde{\otimes} C_2 \rangle_{HS}}{\VERT T_2(C,\Delta)\VERT_2^2}\\
= & \frac{\left \langle T_2(G,\Delta)\widetilde{\otimes}T_1(C,T_2(C,\Delta)), T_2(G,\Delta)\widetilde{\otimes}T_1(C,T_2(C,\Delta)) \right\rangle_{HS}}{\VERT T_2(C,\Delta)\VERT_2^2}.
\end{align*}
\begin{align*}
    \left\langle T_1(C,T_2(G,\Delta)),T_1(G,T_2(C,\Delta))\right\rangle_{S_2(H_2)}  = &  \left\langle C_2 \langle C_1,T_2(G,\Delta)\rangle_{S_2(H_1)} ,T_1(G,C_1) \langle C_2,\Delta \rangle_{S_2(H_2)} \right\rangle_{S_2(H_2)}\\
    = & \langle C_1,T_2(G,\Delta)\rangle_{S_2(H_1)} \langle C_2,\Delta \rangle_{S_2(H_2)} \left\langle C_2  ,T_1(G,C_1)  \right\rangle_{S_2(H_2)}\\
    = & \left\langle \langle C_2,\Delta \rangle_{S_2(H_2)} C_1,T_2(G,\Delta)\right\rangle_{S_2(H_1)} \left\langle C_1 \widetilde{\otimes}C_2  ,G  \right\rangle_{HS}\\
    = & \left\langle T_2(C,\Delta) ,T_2(G,\Delta)  \right\rangle_{S_2(H_1)}  \left\langle C,G \right\rangle_{HS}
\end{align*}
and 
\begin{align*}
 &{\left\langle T_1(C,T_2(C,\Delta)),T_1(G,T_2(G,\Delta))\right\rangle_{S_2(H_2)}}
= {\langle C_2,\Delta \rangle_{S_2(H_2)} \langle C_1,C_1 \rangle_{S_2(H_1)} \langle C_2, T_1(G,T_2(G,\Delta))\rangle_{S_2(H_2)}} \\
& ~~~~~~~~~={\langle C_2,\Delta \rangle_{S_2(H_2)} \langle C_1,C_1 \rangle_{S_2(H_1)} \langle G, T_2(G,\Delta)\widetilde{\otimes}C_2\rangle_{HS}}\\
& ~~~~~~~~~= {\Big \langle G, T_2(G,\Delta)\widetilde{\otimes}\big(\langle C_2,\Delta \rangle_{S_2(H_2)} \langle C_1,C_1 \rangle_{S_2(H_1)} C_2\big) \Big\rangle_{HS}}\\
& ~~~~~~~~~={\left \langle G, T_2(G,\Delta)\widetilde{\otimes}T_1(C,T_2(C,\Delta)) \right\rangle_{HS}}~,
\end{align*}
(here we make extensive use of Proposition \ref{prop:t1}). 
This gives (observing the definition of $a_0,a_1,a_2,c_0,c_1,c_2,b_2$)
\begin{eqnarray}
\frac12
\frac{d^2}{d^2t}\phi( C + tG) |_{t=0} &= &
 a_0 c_2+ a_1 c_1 + a_2 c_0 -b_2  \\
 \nonumber
&=&  \left\VERT G {\VERT T_2(C,\Delta)\VERT_2}- \frac{T_2(G,\Delta)\widetilde{\otimes}T_1(C,T_2(C,\Delta))}{\VERT T_2(C,\Delta)\VERT_2} \right\VERT_2^2  \\
&-& {\VERT T_1(G,T_2(C,\Delta)) - T_1(C,T_2(G,\Delta)\VERT_2^2}.
\nonumber
%
\end{eqnarray}
Finally, taking $G := \sqrt{N}(\widehat{C}_N - C)$,  $t = 1/\sqrt{N}$ and using a Taylor expansion, we obtain (note that $\phi (C) =0$ under the null hypothesis)
\begin{eqnarray*}
N \phi(\widehat{C}_N)  &=&  \frac12
\frac{d^2}{d^2t}\phi( C + t\sqrt{N}(\widehat{C}_N - C))) |_{t=0} +o_p(1) \\
& \stackrel{d}{\to} &\left\VERT \mathcal G {\VERT T_2(C,\Delta)\VERT_2} - \frac{T_2(\mathcal G,\Delta)\widetilde{\otimes}T_1(C,T_2(C,\Delta))}{\VERT T_2(C,\Delta)\VERT_2} \right\VERT_2^2 \\
&& - {\VERT T_1(\mathcal G,T_2(C,\Delta)) - T_1(C,T_2(\mathcal G,\Delta)\VERT_2^2},
\end{eqnarray*}
and Theorem \ref{thm:null_dist} follows from Slutzky's Lemma
noting that 
$$
\widehat{D}_N = \frac{\phi(\widehat{C}_N) -\phi( C) }{\VERT T_2(\widehat{C}_N,\Delta)\VERT_2^2}
= \frac{\phi(\widehat{C}_N) }{\VERT T_2(\widehat{C}_N,\Delta)\VERT_2^2} .
$$
\end{proof}
}

 {
In the following let $q_{1-\alpha}$ be the $100\alpha \%$ quantile of the limiting random variable in Theorem \ref{thm:null_dist}, then an asymptotic level $\alpha$ test for the hypothesis in \eqref{H0A} is obtained by rejecting $H_0$, whenever 
\begin{equation}
   \label{testasy}
N\widehat{D}_N > q_{1-\alpha}.
\end{equation}
The next  result provides the asymptotic distribution under  the  alternative and  yields as a consequence the consistency of this test.}

\begin{thm}
\label{thm:asym_dist}
If ~$\E\| X \|_2^4 < \infty$, then the statistic $$
\sqrt{N}\left(\widehat{D}_N - D_0\right)
$$
converges in distribution to a centered normal distribution with variance 
{\begin{equation} \label{nusquare}
\nu^2 := 4 \left\langle \Gamma (A-B), (A-B)\right\rangle_{HS},
\end{equation}
where
\begin{align*}
A = &  C - \frac{T_2(C,\Delta) \widetilde{\otimes} T_1(C,T_2(C,\Delta))}{\VERT T_2(C,\Delta)\VERT_2^2},\\
B = &\frac{1}{\VERT T_2(C,\Delta)\VERT_2^2}\left[T_2(C,T_1(C,T_2(C,\Delta)))\widetilde{\otimes}\Delta - \frac{\VERT T_1(C,T_2(C,\Delta))\VERT_2^2}{\VERT T_2(C,\Delta)\VERT_2^2}T_2(C,\Delta)\widetilde{\otimes}\Delta\right],
\end{align*}}
and
the centering term $D_0$
 is defined in \eqref{mindistopt}.
\end{thm}
         
\begin{proof}
Observing \eqref{mindistopt}  and \eqref{estDN} we write
\begin{equation}
\label{eq:dnh0}   
\sqrt{N}\left(\widehat{D}_N - D_0\right)  
= \sqrt{N}\left( \VERT \widehat{C}_N \VERT_2^2 - \frac{\VERT T_1(\widehat{C}_N,T_2(\widehat{C}_N,\Delta))\VERT^2}{\VERT T_2(\widehat{C}_N,\Delta) \VERT_2^2} -  \VERT C \VERT_2^2 + \frac{\VERT T_1(C,T_2(C,\Delta))\VERT^2}{\VERT T_2(C,\Delta) \VERT_2^2}\right)
\end{equation}   
and  note that $\widehat{D}_N$ and $D_0$ are functions of the random variables
\begin{align}
\label{GN}
G_N = & \left(\VERT \widehat{C}_N \VERT_2^2 ,~{\VERT T_1(\widehat{C}_N,T_2(\widehat{C}_N,\Delta))\VERT^2}, ~{\VERT T_2(\widehat{C}_N,\Delta) \VERT_2^2}\right)^T,\\
\label{G}
G = & \left(\VERT C \VERT_2^2 ,~{\VERT T_1(C,T_2({C},\Delta))\VERT^2}, ~{\VERT T_2({C},\Delta) \VERT_2^2}\right)^T,
\end{align}
respectively. Therefore, we first investigate the weak convergence of the vector $\sqrt{N} (G_N - G)$. For this purpose we note that for $K,L \in S_2(H)$, the identity $$\VERT K\VERT_2^2 - \VERT L \VERT_2^2 = \VERT K-L \VERT_2^2 + 2\langle K-L,L \rangle_{HS} $$ holds and introduce the decomposition
$$\sqrt{N}(G_N - G) = I_N + II_N~, $$
where the random variables $I_N$ and $II_N$ are defined by
\begin{align*}
I_N = &\sqrt{N}\left(\VERT \widehat{C}_N - C\VERT_2^2 ,~{\VERT T_1(\widehat{C}_N,T_2(\widehat{C}_N,\Delta)) - T_1(C,T_2(C,\Delta))\VERT_2^2}, ~{\VERT T_2(\widehat{C}_N,\Delta) - T_2(C,\Delta)\VERT_2^2}\right)^T,\\
II_N = & 2\sqrt{N}\left(\left\langle \Cn - C,C\right\rangle_{HS}, ~ \left\langle T_1(\Cn,T_2(\Cn,\Delta))-T_1(C,T_2(C,\Delta)),T_1(C,T_2(C,\Delta))\right\rangle_{HS},\right.\\
&~~~~~~~~~~~ \left.\left\langle T_2(\Cn,\Delta) - T_2(C,\Delta), T_2(C,\Delta)\right\rangle_{HS}\right)^T.
\end{align*}
Using the linearity of $T_1$ and$T_2$ we further write
\begin{align*}
T_1(\widehat{C}_N,T_2(\widehat{C}_N,\Delta)) - T_1(C,T_2(C,\Delta)) =& T_1(\widehat{C}_N,T_2(\widehat{C}_N,\Delta)) - T_1(C,T_2(\widehat{C}_N,\Delta)) \\
+ & T_1(C,T_2(\Cn,\Delta))- T_1(C,T_2(C,\Delta))\\
= &T_1(\widehat{C}_N-C,T_2(\widehat{C}_N,\Delta)) + T_1(C,T_2(\Cn-C,\Delta))
\end{align*}
and  obtain the representation
\begin{align*}
I_N = &\frac{1}{\sqrt{N}}\left(\begin{array}{c}
\Big\VERT \sqrt{N}(\Cn -C) \Big\VERT_2^2 \\
\Big\VERT T_1\left(\sqrt{N}(\Cn - C),T_2(\Cn,\Delta)\right) + T_1\left(C,T_2(\sqrt{N}(\Cn-C),\Delta)\right) \Big\VERT_2^2\\
\Big\VERT T_2\left(\sqrt{N}(\Cn-C),\Delta\right)\Big\VERT_2^2
\end{array}\right) \\
=: &
\frac{1}{\sqrt{N}}F_1\left(\sqrt{N}(\Cn-C),\Cn\right),\\
II_N = &2\left(\begin{array}{c}
\left\langle \sqrt{N}(\Cn - C),C\right\rangle_{HS}\\
\left\langle T_1\left(\sqrt{N}(\Cn - C),T_2(\Cn,\Delta)\right) + T_1\left(C,T_2(\sqrt{N}(\Cn-C),\Delta)\right),T_1(C,T_2(C,\Delta))\right\rangle_{HS}\\
\left\langle T_2\left(\sqrt{N}(\Cn - C),\Delta\right),T_2(C,\Delta)\right\rangle_{HS}
\end{array}\right) \\
=: &
F_2\left(\sqrt{N}(\Cn-C),\Cn\right),
\end{align*}
where the last equations define the functions $F_1$ and $F_2$ in an obvious manner.
Note that 
$$
F := (F_1,F_2) : S_2(H) \times S_2(H) \mapsto \R^6
$$
is composition of continuous functions and hence continuous. By Proposition 5 in \citet{dauxois1982asymptotic}  the random variable $\sqrt{N}(\Cn - C)$ converges in distribution to a centered Gaussian random element $\mathcal{G}$  with variance 
\eqref{limop}
 in $S_2(H)$ with respect to Hilbert-Schmidt topology. Therefore, using continuous mapping arguments we have 
 $$
 F\left(\sqrt{N}(\Cn-C),\Cn\right) \stackrel{d}{\to} F(\mathcal{G},C)~,
 $$
  and consequently
$$\sqrt{N}(G_N - G) = \frac{1}{\sqrt{N}}F_1\left(\sqrt{N}(\Cn-C),\Cn\right) + F_2\left(\sqrt{N}(\Cn-C),\Cn\right) \stackrel{d}{\to} F_2(\mathcal{G},C).$$
We write
\begin{align*}
F_2(\mathcal{G},C) = 2\left(\begin{array}{c}
\left\langle \mathcal{G},C\right\rangle_{HS}\\
\left\langle T_1\left(\mathcal{G},T_2(C,\Delta)\right) + T_1\left(C,T_2(\mathcal{G},\Delta)\right),T_1(C,T_2(C,\Delta))\right\rangle_{HS}\\
\left\langle T_2\left(\mathcal{G},\Delta\right),T_2(C,\Delta)\right\rangle_{HS}
\end{array}\right)~,
\end{align*}
which can be further simplified as
\begin{align}
\label{eq:f2gc}
F_2(\mathcal{G},C) = 2\left(\begin{array}{c}
\left\langle \mathcal{G},C\right\rangle_{HS}\\
\left\langle \mathcal{G},  T_2(C,\Delta) \tilde{\otimes} T_1(C,T_2(C,\Delta))\right\rangle_{HS} + \left\langle C, T_2(\mathcal{G} ,\Delta)   \tilde{\otimes}  T_1(C,T_2(C,\Delta)) \right\rangle_{HS} \\
\langle \mathcal{G},T_2(C,\Delta) \tilde{\otimes} \Delta \rangle_{HS}
\end{array}\right)~.
\end{align}
By Proposition \ref{prop:t1} $T_2(\mathcal{G},\Delta)$ is a Gaussian Process in $S_2(H_2)$. This fact along with Lemma \ref{lem:gauss_ten} in Appendix \ref{sec5} imply that $F_2(\mathcal{G},C)$ is a normal distributed random vector with mean zero and covariance matrix, say $\Sigma$. 
By \eqref{eq:dnh0}, 
$$
\sqrt{N}\left(\widehat{D}_N -D_0\right) = \sqrt{N}(f(G_N) - f(G))~, 
$$
 where the function $f: \R^3 \mapsto \R$ is defined by $f(x,y,z) = x - y/z$ and $G_N$ and $G$ are defined in \eqref{GN} and \eqref{G}, respectively.
 Therefore, using the delta method and the fact that $$\P\left( \VERT T_2(\Cn,\Delta) \VERT_2^2 > 0\right) \to \P\left( \VERT T_2(C,\Delta) \VERT_2^2 > 0\right) = 1$$ as $\VERT C \VERT_2 \neq 0$, we finally obtain
\begin{equation}
\sqrt{N}\left(\widehat{D}_N - D_0 \right)\stackrel{d}{\to} N\left(0,(\nabla f(G))^T \Sigma (\nabla f(G))\right)
\end{equation}
as $n \to \infty$ where $\nabla f(x,y,z) = (1, - 1/z,      y/z^2)^T$ denotes the gradient of the function $f$.
{
Finally,  the proof of the representation \eqref{nusquare} of the limiting variance is given in Section \ref{a6} in the Appendix.}
\end{proof}

{
\begin{Remark} {\rm 
If the null hypothesis is true, i.e., $C = C_1 \otimes C_2$, the variance $\nu^2$ in Theorem \ref{thm:asym_dist} becomes zero. 
Indeed, under the null hypothesis of separability we have
\begin{align*}
    T_2(C,\Delta) \widetilde{\otimes} T_1(C,T_2(C,\Delta)) = & 
      \langle C_2,\Delta\rangle^2 C_1 \widetilde{\otimes} T_1(C,C_1) =\langle C_2,\Delta\rangle^2 \langle C_1, C_1 \rangle  C_1 \widetilde{\otimes} C_2
    \\
    = &   \VERT \langle C_2,\Delta\rangle C_1 \VERT_2^2 C
    =  \VERT T_2(C,\Delta)\VERT_2^2 C,
\end{align*}
which implies  $A = 0$ for the quantity $A$ in Theorem \ref{thm:asym_dist}. Similarly, 
\begin{align*}
   & \frac{\VERT T_1(C,T_2(C,\Delta))\VERT_2^2}{\VERT T_2(C,\Delta)\VERT_2^2}T_2(C,\Delta) =  \frac{\VERT T_1(C,C_1)\VERT_2^2\langle C_2,\Delta \rangle^2}{\VERT T_2(C,\Delta)\VERT_2^2}T_2(C,\Delta)\\
 &~~~ ~~~~~~ ~~~  = \frac{\VERT \langle C_1,C_1 \rangle C_2\VERT_2^2\langle C_2,\Delta \rangle^2}{\VERT \langle C_2,\Delta \rangle C_1\VERT_2^2}T_2(C,\Delta)   =  \frac{\VERT  C_2\VERT_2^2\VERT  C_1\VERT_2^4\langle C_2,\Delta \rangle^2}{\VERT  C_1\VERT_2^2\langle C_2,\Delta \rangle^2}T_2(C,\Delta)\\
&~~~ ~~~~~~ ~~~     =  \VERT  C_2\VERT_2^2 \VERT  C_1\VERT_2^2T_2(C,\Delta)  =  \langle C_2,C_2 \rangle \langle C_1,C_1 \rangle \langle C_2,\Delta \rangle C_1   
\\
 &~~~ ~~~~~~ ~~~    =  \langle C_1,C_1 \rangle \langle C_2,\Delta \rangle T_2(C,C_2)  =  T_2(C, C_2\langle C_1,C_1 \rangle)\langle C_2,\Delta \rangle   =  T_2(C, T_1(C,C_1))\langle C_2,\Delta \rangle\\
 &~~~ ~~~~~~ ~~~    =  T_2(C, T_1(C,\langle C_2,\Delta \rangle C_1))
 =  T_2(C,T_1(C,T_2(C,\Delta)))
\end{align*}
and consequently the quantity  $B$ in Theorem \ref{thm:asym_dist} also vanishes.
Therefore under the null hypothesis $\sqrt{N}\widehat{D}_N \stackrel{p}{\to} 0$ (which is also a consequence of Theorem \ref{thm:null_dist}). 
}
\end{Remark}
}

 {
\begin{Remark} \label{rem3}{\rm
A sufficient condition for Theorem \ref{thm:null_dist} and 
\ref{thm:asym_dist} to hold is $\E\|X\|_2^4 < \infty $. {As indicated in Remark 2.2 (1) of  \citet{Aston2017}, this is a weaker than Condition 2.1 in their paper, which assumes
\begin{equation}
\label{aston}
\sum_{j=1}^{\infty}\left(\E\left[\langle X,e_j\rangle^4 \right]\right)^{1/4} < \infty,
\end{equation}
for some orthonormal basis $(e_j)_{j\geq 1}$ of $H$.
}
 These authors used this assumption 
to prove  weak convergence  under  the  trace-norm topology, which is required to establish Theorem 2.3 in \citet{Aston2017}. In contrast the proof of Theorem \ref{thm:asym_dist} here only requires weak 
convergence  under the Hilbert-Schmidt topology, which defines a weaker topology.}
\end{Remark}
}

{
\begin{Remark}
{\rm 
Note that the asymptotic distribution depends (under the null hypothesis and alternative) on the operator $\Delta$.
Under the  assumptions of Theorem \ref{thm:asym_dist}, we 
obtain an approximation of the power of the test \eqref{testasy} by 
\begin{align*}
\P\left({N}\widehat{D}_N > q_{1-\alpha}\right) = &\P\left(\sqrt{N}(\widehat{D}_N - D_0 ) > \frac{q_{1-\alpha}}{\sqrt{N}}-\sqrt{N}D_0 \right)\\ 
\approx  &1 - \Phi\left(\frac{q_{1-\alpha}}{\sqrt{N}\nu} - \frac{\sqrt{N}D_0}{\nu}\right),
\end{align*}
where $\Phi$ is the standard normal distribution function
and $\nu^2$ is defined by \eqref{nusquare}. Under the alternative  
$D_0$ is  positive. 
Therefore the rejection probability converges to $1$ with increasing  sample size $N$ and consequently the proposed test is consistent. \\
Moreover, if $N$ is sufficiently large, the power is a decreasing function of  the variance $\nu^2$ in \eqref{nusquare}. As this quantity depends on the operator $\Delta$ it is desirable to choose $\Delta$
such that  $v^2$ is minimal. The   solution of this optimization problem  depends on the unknown covariance operator $C$ and it seems to be intractable to obtain it explicitly. 
However, we will demonstrate in Section \ref{sec4} that 
for finite sample sizes the resulting tests are not very sensitive with respect to the choice of the operator $\Delta$.
}
\end{Remark}}
\medskip

\subsection{Hilbert-Schmidt integral operators}
\label{sec32}
In the remaining part of this section we concentrate on the case, where  $X$ is a random element in  $H=L^2(S \times T)$ and {$S \subset \R^p$ and $T \subset \R^q$ are bounded measurable sets.} In this particular scenario, we choose $\Delta$ also to be an integral operator generated by a kernel $\psi(t,t^{\prime})$. With this choice, using the explicit formula for the operator $T_1$ described in Proposition \ref{prop:t1kernel} the minimum distance can be expressed in terms of the corresponding kernels, that is
\begin{eqnarray}
\label{eq:dker}
D_0 =  D(T_2(C,\Delta)) &=&
\int_T\int_T\int_S\int_S c^2(s,t,s',t')dsds'dtdt'  \\
\nonumber 
&-& \frac{\int_T\int_T\left[\int_S\int_S c(s,t,s't')\tilde c_1(s,s')dsds'\right]^2dtdt'}{\int_S\int_S \tilde c_1^2(s,s')dsds'}.
\end{eqnarray}
where $ \tilde c_1$ denotes the kernel corresponding to the operator
$T_2(C,\Delta)$, that is
$$
\tilde c_1  (s,s^\prime ) = \int_T \int_T c(s,t,s^\prime,t^{\prime} )\psi(t,t^{\prime}) dt dt^{\prime}~.
$$
In this case the estimator $\Cn$ defined in \eqref{estgen}
is induced by the kernel
$$\hat{c}_N(s,t,s',t') = \frac{1}{N}\sum_{i=1}^N(X_i(s,t)-\overline{X}(s,t))(X_i(s',t')-\overline{X}(s',t'))~,$$
and the estimator $\widehat{C}_{1N} = T_2(\widehat{C}_N,\Delta)$
defined in  \eqref{estgenc1}
is induced by the kernel
$$\hat{c}_{1N} (s,s^\prime )  = \int_T \int_T \hat{c}_N(s,t,s',t')\psi(t,t')dt dt'.
$$
The estimator $\widehat D_N$ of $D_0$ is calculated by plugging in $\hat{c}_N$ and $\hat{c}_{1N}$ to the expression in \eqref{eq:dker}.

{
\begin{Remark}
\label{rem4}
{\rm ~
\begin{itemize}
    \item [(a)]
A natural choice for  $\Delta$ is an operator with  constant kernel, that is
$\psi (t,t^\prime ) \equiv 1$, which gives for the kernel of the operator $T_1(C,\Delta)$ 
$$
 \int_T \int_T c(s,t,s',t')dtdt',
$$
This operator has to be distinguished form  partial trace, which is defined as the integral operator with kernel
$$
 \int_T c(s,t,s',t)dt.
$$
and was used by \citet{Aston2017}.
\item[(b)]
Although the proposed  estimator is based on the norm of the complete covariance kernel $c$, numerically we do not need to store the complete covariance kernel. 
 For example, we obtain for the first term of the statistic
 $\widehat D_N$ the representation
\begin{align*}
\VERT \Cn \VERT_2^2 = &\frac{1}{N^2}\int_T\int_S\int_T\int_S\left(\sum_{i=1}^N (X_i(s,t)-\overline{X}(s,t))(X_i(s',t') - \overline{X}(s',t')) \right)^2 ds dt ds' dt'\\
= &\frac{1}{N^2}\sum_{i=1}^N \sum_{j=1}^N \left[\int_T\int_S (X_i(s,t)-\overline{X}(s,t))(X_j(s,t) - \overline{X}(s,t))ds dt\right]^2
\end{align*}
All other  terms of the estimator in \eqref{estDN}  can be  represented similarly  using simple matrix operations on the data matrix without storing the full or marginal covariance kernels.
\end{itemize}
}
\end{Remark}}

{
\subsection{Bootstrap test for separability}
\label{sec:test}
An obvious method for testing the hypothesis of separability is based on the quantiles of the limiting random variable given in Theorem \ref{thm:null_dist}. For this purpose one can  estimate the limiting covariance operator $\Gamma$ from the data and simulate a centered Gaussian process $\mathcal{G}$ with covariance operator $\Gamma$. The limiting distribution can then be calculated as function of the simulated Gaussian processes. The simulated $100(1-\alpha)\%$ quantile is then compared to $N\widehat{D}_N$ to test the null hypothesis, which gives the test \eqref{testasy}. It turns out that this approach provides a very powerful test for the hypothesis of separability (see the empirical results in Section \ref{sec4}).

As this method requires the estimation of the covariance kernel $\Gamma$ we also propose a bootstrap test. The simplest method would be to approximate the limiting distribution of $N\widehat{D}_N$ by
the distribution of  the statistic 
$\{N\hat D_N^* - N\widehat{D}_N\}$, where $\hat D_N^*$ is the test statistic calculated from  a bootstrap sample drawn from $X_1, \ldots , X_N$ with replacement.  

However, this procedure fails to give good power under the alternative. This observation can be explained by studying the test statistic a little more closely. In general, we can write
\begin{align*}
    \widehat{D}_N - D_0 =& \VERT \widehat{C}_N \VERT_2^2 - \frac{\VERT T_1(\widehat{C}_N,T_2(\widehat{C}_N,\Delta))\VERT_2^2}{\VERT T_2(\widehat{C}_N,\Delta) \VERT_2^2} - \VERT C \VERT_2^2 + \frac{\VERT T_1(C,T_2(C,\Delta))\VERT_2^2}{\VERT T_2(C,\Delta) \VERT_2^2}\\
    = &A_{1,N} + A_{2,N}
\end{align*}
where the statistics $A_{1,N}$ and $  A_{2,N}$ are given by 
\begin{align}
\label{eq:decomp}
    A_{1,N} = &~\VERT \widehat{C}_N - C \VERT_2^2 - \frac{\VERT T_1(\widehat{C}_N - C,T_2(\widehat{C}_N,\Delta)) \VERT_2^2 + \VERT T_1(C,T_2(\widehat{C}_N - C,\Delta)) \VERT_2^2}{\VERT T_2(\widehat{C}_N,\Delta) \VERT_2^2}\\
    &~+ \frac{\VERT T_1(C,(T_2(C,\Delta)))\VERT_2^2}{\VERT T_2(\widehat{C}_N,\Delta) \VERT_2^2 \VERT T_2(C,\Delta) \VERT_2^2} \VERT T_2(\widehat{C}_N - C,\Delta) \VERT_2^2 , \nonumber\\
    \label{eq:decomp1}
     A_{2,N} = &~2\langle \widehat{C}_N - C,C \rangle_{HS} - \frac{2\langle T_1(\widehat{C}_N - C,T_2(\widehat{C}_N,\Delta)),  T_1( C,T_2(\widehat{C}_N,\Delta))\rangle}{\VERT T_2(\widehat{C}_N,\Delta) \VERT_2^2}\\
    &~- \frac{2\langle T_1( C,T_2(\widehat{C}_N-C,\Delta)),  T_1( C,T_2(C,\Delta))\rangle_{S_2(H_2)}}{\VERT T_2(\widehat{C}_N,\Delta) \VERT_2^2}\nonumber\\
    & ~+\frac{2\VERT T_1(C,(T_2(C,\Delta)))\VERT_2^2}{\VERT T_2(\widehat{C}_N,\Delta) \VERT_2^2 \VERT T_2(C,\Delta) \VERT_2^2} \langle T_2(\widehat{C}_N - C,\Delta),T_2(C,\Delta) \rangle_{S_2(H_1)},
    \nonumber
\end{align}
respectively.
If the true underlying covariance operator $C$ is separable, then $  A_{2,N} =  0$ and hence only the first term contributes to the limiting null distribution. Now note that  a similar decomposition  for the bootstrap statistic gives 
$${D}_N^* - \widehat{D}_N =   A_{1,N}^*  + A_{2,N}^*,$$
where $A_{1,N}^*$ and $A_{2,N}^*$ are defined similarly as in \eqref{eq:decomp} and \eqref{eq:decomp1} replacing $\widehat{C}_N$ by its  bootstrap analogue $\widehat{C}_N^*$ and $C$ by $\widehat{C}_N$, respectively. The first term $NA_{1,N}^* $ can be shown to approximate the limiting distribution of $NA_{1,N}$, which is the desired null limiting distribution. However, the estimate $\widehat{C}_N$ is in general not separable. As consequence
$N A_{2,N}^*$ is not zero and   a simple bootstrap 
using the quantile of the distribution of
$N(\hat D_N^* - \widehat{D}_N)$ will result in a test with very low power. }

 {To avoid this problem, instead of using the quantile of 
$N(\hat D_N^* - \widehat{D}_N)$  we  propose to use the quantile of the distribution 
of $NA_{1,N}^{*}$. This quantile  can be estimated by the 
empirical quantile from the  bootstrap sample  $NA_{1,N}^{1*}, \ldots , NA_{1,N}^{B*}$ 
(here$ NA_{1,N}^{b*}$ is the corresponding statistic calculated from the $b$th bootstrap sample, for $b=1, \ldots , B$).} 

{\color{red}
\begin{table}[t]
\begin{center}
\caption{\it Empirical rejection probabilities of different tests for the hypothesis of separability (level $5\%$). 
M1: the bootstrap test proposed in this paper, 
M2: the asymptotic test \eqref{testasy} proposed  in this paper, {(the different kernels are indicated in brackets for both M1 and M2)};
The data are generated from Gaussian distribution with zero mean and covariance kernel given by \eqref{eq:ker_Gneitning}, where the case $\beta=0$ corresponds to the null hypothesis of separability.}
\label{Gneitning_Gauss}
\bigskip
\begin{tabular}{|c|c|c|c|c|c|c|c|}
\hline
$\beta$ & N &  M1 ($\psi_1$)  & M1 ($\psi_2$) & M1 ($\psi_3$)  & M2 ($\psi_1$)& M2($\psi_2$) & M2($\psi_3$) \\
\hline
0 & 100 & 3.7 \% & 4.5\% & 3.9\% & 4.6\% & 4.3\% & 4.5\%  \\
0 & 200 & 3.5\% & 4.0\% & 4.2\% & 4.9\% & 4.6\% & 4.8\% \\
0 & 500 & 4.1\% & 5.5\% & 3.6\% & 4.2\% & 4.8\% & 4.5\% \\
\hline
0.3 & 100 & 30.3 \% & 34.6\% & 31.7\% & 53.8\% & 48.1\% & 49.3\%  \\
0.3 & 200 & 45.4 \% & 41.2\% & 47.9\% & 71.6\% & 73.4\% & 72.5\% \\
0.3 & 500 & 52.7\% &53.2\% & 50.5\% & 81.8\% & 77.4\% & 79.3\% \\
\hline
0.5 & 100 & 46.9 \% & 47.8\% & 50.2\% & 63.4\% & 65.1\% & 64.8\% \\
0.5 & 200 & 70.3\% & 70.3\% & 71.0\% & 87.9\% & 88.1\% & 87.1\%  \\
0.5 & 500 & 88.9 \% & 83.2\% & 89.1\% & 93.5\% & 91.6\% & 94.2\% \\
\hline
0.7 & 100 & 66.2 \% & 68.3\% & 67.8\% & 79.4\% & 80.2\% & 81.6\% \\
0.7 & 200 & 82.8 \% & 79.7 \% & 83.5\% & 93.4 \% & 95.7\% & 92.9\% \\
0.7 & 500 &  89.1 \% & 92.5\% & 93.3 \% & 100\% & 100\% & 99.9\% \\
\hline
1 & 100 & 88.6\% & 90.4\% & 89.9\% & 97.3\% & 96.8\% & 96.2\%  \\
1 & 200 & 93.5\% & 94.6\% & 93.9 \%  & 100\% & 100\% & 100\% \\
1 & 500 & 96.8\% & 95.9\% & 97.1\% & 100\% & 100\% & 100\% \\
\hline
\end{tabular}
\end{center}
\end{table}

\begin{table}[t]
\begin{center}
\caption{
\label{tab2}
\it Empirical rejection probabilities of different tests for the hypothesis of separability (level $5\%$) based on \citet{Aston2017} (M3) and \citet{conkokrei2017}(M4); The number of spatial and temporal components used are indicated in the bracket. The data are generated from Gaussian distribution with zero mean and covariance kernel given by \eqref{eq:ker_Gneitning}, where the case $\beta=0$ corresponds to the null hypothesis of separability.}
\label{Gneitning_Gauss_Other}
\bigskip
\begin{tabular}{|c|c|c|c|c|c|c|c|}
\hline
$\beta$ & N &  M3 (k=2)  & M3 (k=3) & M3 (k=4)  & M4 (L=J=2)& M4 (L=J=3) & M4 (L=J=4) \\
\hline
0 & 100 & 4.1 \% & 4.3\% & 4.9\% & 4.2\% & 4.8\% & 5.5\%  \\
0 & 200 & 4.5\% & 4.2\% & 4.8\% & 5.1\% & 5.1\% & 5.3\% \\
0 & 500 & 5.0\% & 4.7\% & 5.1\% & 4.8\% & 5.3\% & 5.6\% \\
\hline
0.3 & 100 & 46.6 \% & 48.1\% & 51.2\% & 13.4\% & 32.2\% & 48.4\%  \\
0.3 & 200 & 62.0 \% & 71.5\% & 70.8\% & 22.8\% & 42.6\% & 59.7\% \\
0.3 & 500 & 65.8\% & 73.4\% & 81.2\% & 33.7\% & 53.3\% & 69.5\% \\
\hline
0.5 & 100 & 57.1 \% & 62.3\% & 70.1\% & 25.9\% & 42.6\% & 58.3\% \\
0.5 & 200 & 86.2\% & 88.4\% & 89.4\% & 43.4\% & 57.9\% & 79.8\%  \\
0.5 & 500 & 90.2 \% & 91.5\% & 93.2\% & 55.7\% & 71.2\% & 90.5\% \\
\hline
0.7 & 100 & 73.6 \% & 76.1\% & 80.1\% & 48.4\% & 61.8\% & 75.4\% \\
0.7 & 200 & 90.1 \% & 92.4 \% & 95.8\% & 59.3 \% & 76.3\% & 93.9\% \\
0.7 & 500 &  99.7 \% & 99.9\% & 100 \% & 66.5\% & 91.4\% & 100\% \\
\hline
1 & 100 & 92.7\% & 96.7\% & 100\% & 59.7\% & 95.6\% & 100\%  \\
1 & 200 & 99.4\% & 100\% & 100 \%  & 80.1\% & 99.9\% & 100\% \\
1 & 500 & 100\% & 100\% & 100\% & 99.9\% & 100\% & 100\% \\
\hline
\end{tabular}
\end{center}
\end{table}

\begin{table}[t]
\begin{center}
\caption{
\it Empirical rejection probabilities of different tests for the hypothesis of separability (level $5\%$).
M1: the bootstrap test proposed in this paper, 
M2: the asymptotic test \eqref{testasy} proposed  in this paper {(the different kernels are indicated in brackets for both M1 and M2)}. The data are generated from t distribution with 5 degrees of freedom and covariance kernel given by \eqref{eq:ker_Gneitning}, where the case $\beta=0$ corresponds to the null hypothesis of separability.}
\label{Gneitning_t}
\bigskip
\begin{tabular}{|c|c|c|c|c|c|c|c|}
\hline
$\beta$ & N &  M1 ($\psi_1$)  & M1 ($\psi_2$) & M1 ($\psi_3$)  & M2 ($\psi_1$)& M2($\psi_2$) & M2($\psi_3$)  \\
\hline
0 & 100 & 5.3 \% & 5.8\% & 3.8\% & 6.1\% & 5.5\% & 4.3\%  \\
0 & 200 & 4.7\% & 3.4\% & 3.2\% & 4.9\% & 3.7\% & 3.4\%  \\
0 & 500 & 4.2\% & 3.9\% & 5.3\% & 5.2\% & 5.7\% & 4.9\%  \\
\hline
0.3 & 100 & 13.4 \% & 15.8\% & 12.9\% & 31.3\% & 29.6\% & 28.4\% \\
0.3 & 200 & 23.7 \% & 26.2\% & 25.1\% & 35.3\% & 37.2\% & 36.9\% \\
0.3 & 500 & 47.7\% &48.3\% & 50.9\% & 65.4\% & 67.1\% & 64.9\% \\
\hline
0.5 & 100 & 29.8 \% & 35.1\% & 30.2\% & 44.8\% & 48.1\% & 41.6\%  \\
0.5 & 200 & 49.7\% & 50.8\% & 51.4\% & 66.2\% & 58.4\% & 63.9\%  \\
0.5 & 500 & 86.4 \% & 82.9\% & 88.1\% & 92.5\% & 90.8\% & 91.1\% \\
\hline
0.7 & 100 & 48.3 \% & 46.5\% & 47.1\% & 64.2\% & 59.8\% & 61.7\% \\
0.7 & 200 & 69.4 \% & 70.1\% & 68.9\% & 86.8 \% & 88.4\% & 89.7\% \\
0.7 & 500 &  92.4 \% & 93.6\% & 91.8\% & 95.7\% & 94.9\% & 96.2\% \\
\hline
1 & 100 & 72.1\% & 68.4\% & 67.9\% & 83.5\% & 85.8\% & 89.6\%  \\
1 & 200 & 90.1\% & 91.4\% & 88.6 \%  & 95.9\% & 97.1\% & 95.4\% \\
1 & 500 & 96.4\% & 95.3\% & 95.1\% & 99.8\% & 100\% & 99.9\% \\
\hline
\end{tabular}
\end{center}
\end{table}

\begin{table}[t]
\begin{center}
\caption{\label{tab4}
\it Empirical rejection probabilities of different tests for the hypothesis of separability (level $5\%$) based on \citet{Aston2017} (M3) and \citet{conkokrei2017}(M4); The number of spatial and temporal components used are indicated in the bracket. The data are generated from $t$ distribution with 5 degrees of freedom and covariance kernel given by \eqref{eq:ker_Gneitning}, where the case $\beta=0$ corresponds to the null hypothesis of separability.}
\label{Gneitning_t_Other}
\bigskip
\begin{tabular}{|c|c|c|c|c|c|c|c|}
\hline
$\beta$ & N &  M3 (k=2)  & M3 (k=3) & M3 (k=4)  & M4 (L=J=2)& M4 (L=J=3) & M4 (L=J=4) \\
\hline
0 & 100 & 6.4 \% & 5.1\% & 5.6\% & 3.5\% & 4.3\% & 5.0\%  \\
0 & 200 & 3.2\% & 5.3\% & 4.8\% & 4.8\% & 5.2\% & 5.6\% \\
0 & 500 & 4.1\% & 4.7\% & 4.9\% & 4.8\% & 5.4\% & 5.5\% \\
\hline
0.3 & 100 & 15.7 \% & 21.4\% & 29.3\% & 6.8\% & 15.7\% & 24.9\%  \\
0.3 & 200 & 24.3 \% & 32.9\% & 41.6\% & 11.0\% & 23.2\% & 38.8\% \\
0.3 & 500 & 51.8\% &58.7\% & 62.5\% & 26.7\% & 44.6\% & 58.7\% \\
\hline
0.5 & 100 & 35.2 \% & 38.6\% & 41.3\% & 17.9\% & 25.4\% & 41.6\% \\
0.5 & 200 & 52.5\% & 68.4\% & 73.7\% & 31.1\% & 45.2\% & 61.9\%  \\
0.5 & 500 & 87.6 \% & 92.6\% & 96.4\% & 49.2\% & 60.7\% & 79.3\% \\
\hline
0.7 & 100 & 48.9 \% & 52.3\% & 59.9\% & 27.6\% & 49.9\% & 62.6\% \\
0.7 & 200 & 73.4 \% & 80.2 \% & 84.6\% & 38.4 \% & 53.8\% & 84.7\% \\
0.7 & 500 &  93.8 \% & 95.3\% & 98.7 \% & 56.2\% & 70.9\% & 90.1\% \\
\hline
1 & 100 & 71.7\% & 73.4\% & 82.4\% & 59.7\% & 71.3\% & 89.6\%  \\
1 & 200 & 92.8\% & 96.1\% & 99.9 \%  & 76.5\% & 88.4\% & 95.9\% \\
1 & 500 & 98.7\% & 100\% & 100\% & 87.6\% & 98.9\% & 100\% \\
\hline
\end{tabular}
\end{center}
\end{table}


\begin{table}[t]
\begin{center}
\caption{\it Empirical rejection probabilities of different tests for the hypothesis of separability (level $5\%$).
M1: the bootstrap test proposed in this paper, 
M2: the asymptotic test \eqref{testasy} proposed  in this paper {(the different kernels are indicated in brackets for both M1 and M2)}. The data are generated from Gaussian distribution with zero mean and covariance kernel given in \eqref{eq:ker_ch}, where the case $c_0=1$ corresponds to the null hypothesis of separability.}
\label{CH_Gauss}
\bigskip
\begin{tabular}{|c|c|c|c|c|c|c|c|}
\hline
$c_0$ & N &  M1 ($\psi_1$)  & M1 ($\psi_2$) & M1 ($\psi_3$)  & M2 ($\psi_1$)& M2($\psi_2$) & M2($\psi_3$)  \\
\hline
1 & 100 & 4.1 \% & 4.3\% & 3.9\% & 4.8\% & 3.8\% & 4.7\% \\
1 & 200 & 3.6\% & 3.8\% & 4.1\% & 4.8\% & 5.2\%  & 5.4\% \\
1 & 500 & 3.3\% & 3.6\% & 5.7\% & 5.3\% & 4.2\% & 4.8\%  \\
\hline
3 & 100 & 28.9 \% & 33.2\% & 28.1\% & 49.4 \% & 47.1\% & 48.5\% \\
3 & 200 & 39.5 \% & 40.6\% & 42.1\% & 68.5\% & 67.2\% & 69.4\% \\
3 & 500 & 57.2\% & 52.8\% & 55.4\% & 77.1\% & 73.9\% & 75.8\% \\
\hline
5 & 100 & 50.7 \% & 51.3\% & 54.8\% & 71.8\% & 70.1\% & 76.2\%  \\
5 & 200 & 76.1 \% & 72.4\% & 74.7\% & 90.2\% & 89.6\% & 91.3\%  \\
5 & 500 & 83.8\% & 84.9\% & 83.4\% & 99.1\% & 98.7\% & 99.4\% \\
\hline
7 & 100 & 68.4 \% & 68.9\% & 67.5 \% & 83.4\% & 82.7\% & 85.1\% \\
7 & 200 & 87.2\% & 89.1\% & 81.3\% & 94.2\% & 95.3\% & 94.4\%\\
7 & 500 &  97.3 \% & 94.6\% & 95.2\% & 99.9\% & 100\% & 100\% \\
\hline
10 & 100 & 83.8\% & 85.2\% & 84.7\% & 97.4\% & 96.9\% & 98.1\%   \\
10 & 200 & 89.2 \% & 92.6\% & 91.8\% & 100\% & 100\% & 100\% \\
10 & 500 & 97.7\% & 95.8\% & 98.3\% & 100\% & 100\% & 100\% \\
\hline
\end{tabular}
\end{center}
\end{table}

\begin{table}[t]
\begin{center}
\caption{\label{tab6}
\it Empirical rejection probabilities of different tests for the hypothesis of separability (level $5\%$) based on \citet{Aston2017} (M3) and \citet{conkokrei2017}(M4); The number of spatial and temporal components used are indicated in the bracket. The data are generated from Gaussian distribution with zero mean and covariance kernel given by \eqref{eq:ker_ch}, where the case $c_0=1$ corresponds to the null hypothesis of separability.}
\label{CH_Gauss_Other}
\bigskip
\begin{tabular}{|c|c|c|c|c|c|c|c|}
\hline
$c_0$ & N &  M3 (k=2)  & M3 (k=3) & M3 (k=4)  & M4 (L=J=2)& M4 (L=J=3) & M4 (L=J=4) \\
\hline
1 & 100 & 5.3 \% & 5.2\% & 5.8\% & 7.7\% & 6.5\% & 7.4\%  \\
1 & 200 & 4.5\% & 4.9\% & 5.2\% & 6.9\% & 7.1\% & 7.5\% \\
1 & 500 & 3.9\% & 4.8\% & 4.4\% & 6.5\% & 6.8\% & 7.2\% \\
\hline
3 & 100 & 38.7 \% & 43.8\% & 48.5\% & 9.8\% & 14.3\% & 33.1\%  \\
3 & 200 & 60.9 \% & 68.3\% & 72.1\% & 15.2\% & 24.6\% & 41.2\% \\
3 & 500 & 65.9\% & 70.4\% & 73.3\% & 26.4\% & 33.4\% & 52.7\% \\
\hline
5 & 100 & 62.5 \% & 64.1\% & 68.2\% & 39.8\% & 51.5\% & 63.9\% \\
5 & 200 & 85.4\% & 89.7\% & 91.4\% & 56.3\% & 62.4\% & 77.1\%  \\
5 & 500 & 95.1 \% & 96.3\% & 98.5\% & 78.2\% & 83.6\% & 94.8\% \\
\hline
7 & 100 & 77.3 \% & 80.2\% & 81.1\% & 64.9\% & 73.2\% & 84.9\% \\
7 & 200 & 92.8 \% & 94.1 \% & 96.3\% & 85.2 \% & 92.3\% & 98.7\% \\
7 & 500 &  99.9 \% & 100\% & 100\% & 93.6\% & 97.6\% & 100\% \\
\hline
10 & 100 & 94.2\% & 96.8\% & 99.9\% & 77.1\% & 86.9\% & 95.8\%  \\
10 & 200 & 100\% & 100\% & 100 \%  & 90.8\% & 100\% & 100\% \\
10 & 500 & 100\% & 100\% & 100\% & 97.1\% & 100\% & 100\% \\
\hline
\end{tabular}
\end{center}
\end{table}

\begin{table}[t]
\begin{center}
\caption{\it Empirical rejection probabilities of different tests for the hypothesis of separability (level $5\%$).
M1: the bootstrap test proposed in this paper, 
M2: the asymptotic test \eqref{testasy} proposed  in this paper {(the different kernels are indicated in brackets for both M1 and M2)};
M3: the test of \citet{Aston2017};  M4:  the test of \citet{conkokrei2017}. The data are generated from t-distribution with 5 degrees of freedom and covariance kernel given in \eqref{eq:ker_ch}, where the case $c_0=1$ corresponds to the null hypothesis of separability.}
\label{CH_t}
\bigskip
\begin{tabular}{|c|c|c|c|c|c|c|c|}
\hline
$c_0$ & N &  M1 ($\psi_1$)  & M1 ($\psi_2$) & M1 ($\psi_3$)  & M2 ($\psi_1$)& M2($\psi_2$) & M2($\psi_3$)  \\
\hline
1 & 100 & 2.3 \% & 2.4\% & 1.8\% & 3.3\% & 2.7\% & 3.6\% \\
1 & 200 & 2.7\% & 1.9\% & 3.4\% & 2.9\% & 3.6\%  & 4.8\%  \\
1 & 500 & 2.9\% & 3.8\% & 3.1\% & 4.3\% & 3.1\% & 3.6\% \\
\hline
3 & 100 & 3.9 \% & 4.6\% & 4.1\% & 6.3 \% & 5.8\% & 4.4\% \\
3 & 200 & 18.2 \% & 13.4\% & 12.9\% & 30.1\% & 27.5\% & 28.9\%\\
3 & 500 & 53.4\% & 52.5\% & 54.3\% & 68.9\% & 67.8\% & 71.2\%\\
\hline
5 & 100 & 8.5 \% & 7.9\% & 9.4\% & 13.3\% & 20.5\% & 17.4\%  \\
5 & 200 & 41.9 \% & 42.6\% & 46.2\% & 61.1\% & 58.3\% & 57.6\%  \\
5 & 500 & 82.9\% & 83.8\% & 85.6\% & 91.0\% & 94.4\% & 92.3\% \\
\hline
7 & 100 & 14.8 \% & 12.9\% & 13.5 \% & 23.8\% & 26.4\% & 25.9\% \\
7 & 200 & 59.3\% & 62.4\% & 63.3\% & 77.5\% & 76.1\% & 73.9\% \\
7 & 500 &  93.4 \% & 94.1\% & 92.8\% & 99.8\% & 100\% & 100\% \\
\hline
10 & 100 & 34.5\% & 33.6\% & 29.8\% & 47.2\% & 46.8\% & 48.3\%   \\
10 & 200 & 59.6 \% & 62.5\% & 63.2\% & 81.4\% & 80.9\% & 86.5\% \\
10 & 500 & 92.9\% & 96.2\% & 95.8\% & 100\% & 100\% & 100\% \\
\hline
\end{tabular}
\end{center}
\end{table}

\begin{table}[t]
\begin{center}
\caption{\label{tab8}
\it Empirical rejection probabilities of different tests for the hypothesis of separability (level $5\%$) based on \citet{Aston2017} (M3) and \citet{conkokrei2017}(M4); The number of spatial and temporal components used are indicated in the bracket. The data are generated from $t$ distribution with 5 degrees of freedom and covariance kernel given by \eqref{eq:ker_ch}, where the case $c_0=1$ corresponds to the null hypothesis of separability.}
\label{CH_t_Other}
\bigskip
\begin{tabular}{|c|c|c|c|c|c|c|c|}
\hline
$\beta$ & N &  M3 (k=2)  & M3 (k=3) & M3 (k=4)  & M4 (L=J=2)& M4 (L=J=3) & M4 (L=J=4) \\
\hline
1 & 100 & 2.9 \% & 3.6\% & 4.2\% & 1.7\% & 3.2\% & 4.9\%  \\
1 & 200 & 1.4\% & 3.8\% & 4.7\% & 3.1\% & 4.6\% & 5.8\% \\
1 & 500 & 4.2\% & 4.5\% & 4.6\% & 3.4\% & 4.4\% & 5.2\% \\
\hline
3 & 100 & 4.3 \% & 10.1\% & 15.4\% & 3.8\% & 9.6\% & 18.3\%  \\
3 & 200 & 15.8 \% & 20.8\% & 23.4\% & 7.3\% & 15.4\% & 28.7\% \\
3 & 500 & 64.6\% & 71.1\% & 76.5\% & 29.3\% & 34.6\% & 49.8\% \\
\hline
5 & 100 & 10.6 \% & 19.6\% & 22.8\% & 7.7\% & 15.3\% & 26.4\% \\
5 & 200 & 43.5\% & 49.9\% & 61.2\% & 32.8\% & 41.3\% & 52.9\%  \\
5 & 500 & 86.7 \% & 88.1\% & 89.8\% & 54.3\% & 63.2\% & 71.6\% \\
\hline
7 & 100 & 14.6 \% & 23.2\% & 28.5\% & 9.8\% & 18.2\% & 30.3\% \\
7 & 200 & 65.7 \% & 71.7 \% & 74.9\% & 39.1 \% & 47.2\% & 58.1\% \\
7 & 500 &  96.4 \% & 97.5\% & 99.8 \% & 81.3\% & 88.6\% & 90.2\% \\
\hline
10 & 100 & 37.4\% & 43.6\% & 55.2\% & 37.4\% & 44.8\% & 57.2\%  \\
10 & 200 & 77.8\% & 82.4\% & 91.7 \%  & 77.8\% & 82.4\% & 90.9\% \\
10 & 500 & 100\% & 100\% & 100\% & 100\% & 100\% & 100\% \\
\hline
\end{tabular}
\end{center}
\end{table}
}

\begin{table}[t]
\begin{center}
\caption{\it Time (in seconds) taken by different methods of testing separability.
\label{tab9}}
\label{Time}
\bigskip
\begin{tabular}{|c|c|c|c|}
\hline
Method & N =100 & N=200 & N= 500 \\
\hline
M1 ($\psi_1$) & 3.76 & 6.21 & 11.73\\
M1 ($\psi_2$) & 3.89 & 6.75 & 12.01\\
M1 ($\psi_3$) & 3.85 & 6.58 & 11.86\\
\hline
M2 ($\psi_1$) & 11.44 & 23.65 & 58.77 \\
M2 ($\psi_2$) & 12.75 & 28.64 & 62.37\\
M2 ($\psi_3$) & 12.59 & 27.98 & 61.75\\
\hline
M3 ($k=2$) & 2.45 & 4.25 & 9.69\\
M3 ($k=3$) & 4.61 & 8.56 & 20.45\\
M3 ($k=4$) & 11.57 & 22.13 & 52.92\\
\hline
M4 ($L=J=2$) & 3.01 & 5.86 & 9.89\\
M4 ($L=J=3$) & 6.48 & 13.77 & 24.96\\
M4 ($L=J=4$) & 14.32 & 28.64 & 62.35\\
\hline
\end{tabular}
\end{center}
\end{table}

\section{Finite Sample Properties} \label{sec4}
\def\theequation{4.\arabic{equation}}
\setcounter{equation}{0}
In this section we study the finite sample properties of a {family of tests} for the hypothesis of separability described in Section \ref{sec:test} by means of a small simulation study. We also compare the new tests with the tests proposed by
\citet{Aston2017} and \citet{conkokrei2017}
and illustrate potential applications with a data example.
For this purpose we have  implemented  {the asymptotic  test \eqref{testasy} based on simulated quantiles of the  random variable appearing in \eqref{eq:null_dist}, the new  bootstrap test as described in Section \ref{sec:test}} as well as the studentized bootstrap test described in \citet{Aston2017} and the weighted $\chi^2$ test based on the test statistic $\widehat{T}_F$ as described in Theorem 3 of \citet{conkokrei2017}. 
The new test depends on the choice of the operator $\Delta$ and we demonstrate in a simulation study that the test is not very sensitive with respect to this choice. For this purpose we consider  three  integral operators with the following kernels 
\begin{align}
\psi_1(t,t') &\equiv 1~,~ 
\psi_2(t,t') = \vert t - t' \vert, \nonumber\\
\psi_3(t,t') & = \exp(-\pi(t^2+t'^2)).\nonumber
\end{align}
The test due to \citet{Aston2017} requires specification of eigen-subspace, which was chosen to be $I_k=\{(i,j):i=1,\dots,k;j=1,\dots,k\}$ for $k=2,3,4$  and p-values were obtained by empirical non-studentized bootstrap. We used the R package "covsep" [see \citet{covsep}] to implement their method. For the tests proposed by \citet{conkokrei2017} we choose the procedure based on the statistic $\widehat{T}_F$ as in a simulation study it turned out to be the most powerful procedure among the four methods proposed in this paper. The test requires the specifications of the number of spatial  and temporal principle components, which  were both taken to be equal and the number was chosen to be $2, 3$ and $4$. 

\subsection{Simulation Studies}
\label{sec41}
The data were generated from  {a zero-mean Gaussian distribution and a t-distribution with 5 degrees of freedom} with  two different covariance kernels. The first is the spatio-temporal covariance kernel 
\begin{equation}
\label{eq:ker_Gneitning}
c(s,t,s',t') = \frac{\sigma^2}{(a\vert t - t' \vert^{2\alpha} + 1)^{\tau}}\exp\left(- \frac{c\|s - s'\|^{2\gamma}}{(a\vert t - t' \vert^{2\alpha} + 1)^{\beta \gamma}}\right),
\end{equation}
introduced by \citet{gneiting2002nonseparable}.
In this covariance function, $a$ and $c$ are nonnegative scaling parameters of time and space, respectively; $\alpha$ and $\gamma$ are smoothness parameters which take values in the interval $(0, 1]$; $\beta$ is the separability parameter which varies in the interval $[0, 1]$; $\sigma^2 > 0$ is the point-wise variance; and $\tau \geq  \beta d/2$, where $d$ is the spatial dimension. If $\beta = 0$, the covariance is separable and the space-time interaction becomes stronger with increasing values of $\beta$. We fix $\gamma =1, \alpha = 1/2, \sigma^2 = 1, a=1, c=1$ and $\tau = 1$ in the following discussion and choose different values for the parameter
$\beta$ specifying the level of separability.

As a second  example we consider covariance structure
\begin{align}
\label{eq:ker_ch}
c(s,t,s',t') = &\frac{\sigma^2c_0^{d/2}}{(a_0^2(t-t')^2 +1)^{1/2}\Big(a_0^2(t-t')^2 + c_0\Big)^{d/2}} \\
&~~~~~~~~~\times \exp\Big [-b_0\|s - s'\| \Big(\frac{a_0^2(t-t')^2 + 1}{a_0^2(t-t')^2 + c_0}\Big)^{1/2} \Big],\nonumber
\end{align}
which was introduced by \citet{cressie1999classes}.
Here $a_0$ and $b_0$ are non negative scaling parameters of time and space respectively; $c_0 > 0$ is the separability parameter, $\sigma^2> 0$ is the point-wise variance and $d$ is the spatial dimension. The covariance is separable if $c_0 = 1$. For the simulation study we take $a_0 = 2, b_0 =1, \sigma^2 = 1$, $d=2$  and  consider different values of the parameter $c_0$.

We generate data at $100$ equally spaced time points in $[0,1]$ and $11$ space points on the grid $[0,1] \times [0,1]$. The integrals are approximated by average of the function value at grid points.  The nominal significance level is taken to be $5\%$ and empirical rejection region are computed based on $1000$ Monte-Carlo replications and $1000$ bootstrap samples. {In order to estimate the quantiles the asymptotic test \eqref{testasy}  we use
 $1000$ simulation runs.}

{
The simulated rejection probabilities of all four tests under consideration are displayed in 
Table \ref{Gneitning_Gauss} - \ref{tab8}, where the  results for the bootstrap test proposed in this paper (denoted by M1), the  asymptotic test \eqref{testasy} using the simulated quantiles of the limiting null distribution (M2) can be found
in Table 
\ref{Gneitning_Gauss}, \ref{Gneitning_t}, \ref{CH_Gauss}.
and  \ref{CH_t}. For the sake of comparison we show in Table 
\ref{tab2}, \ref{tab4}, \ref{tab6} and \ref{tab8} the
corresponding results for the test using  projections on sub-spaces suggested in \citet{Aston2017} (M3) and the weighted $\chi^2$ test proposed in  \citet{conkokrei2017} (M4).  The computing times of the different procedures are compared in Table \ref{tab9}. The computing times are based on data with 10 equally spaced time and space points both from $[0,1]$ generated from zero-mean Gaussian process with covariance kernel given by \eqref{eq:ker_Gneitning}.}

 {
All procedures yield rather similar results under the null hypothesis, and in general  the nominal level is very well approximated by all tests under consideration. On the other hand, under the alternative we observe more differences. The  asymptotic test \eqref{testasy} proposed in this paper  yields the best power.
The power of the test  of \citet{Aston2017} 
is increasing with the number $k$ of  eigen-subspaces used in the procedure. This  improvement is achieved at the expense of the computing time [see Table \ref{tab9}]. A similar observation can be made for  the test of \citet{conkokrei2017} with respect to the number
of spatial and temporal principle components.  
}

 {
The results of the   bootstrap test proposed in this paper  and the test  of \citet{Aston2017} are similar if the latter is used with $k=2$ sub-spaces, but 
 the test  of \citet{Aston2017}   is more powerful
for $k=3,4$. The bootstrap test proposed in this paper
has more power than the test of \citet{conkokrei2017}
with $L=J=2$ or $L=J=3$ 
spatial and temporal principle components.
For the choice  $L=J=4$ the test 
\citet{conkokrei2017} shows a slightly better performance.}

 {In general an improvement in power 
is always achieved at a cost of computational time  - see Table \ref{tab9}. For example, 
the  asymptotic test \eqref{testasy} proposed in this paper  turns out to be the most powerful procedure, but also requires the most computational time. However, if one wants to improve the power of  the competing tests of \citet{Aston2017}  and \citet{conkokrei2017}, one has to use a larger number of sub-spaces or spatial and temporal principle components, respectively, and this also increases the computational time substantially. 
From Table \ref{tab9} we see that the computational times
of the tests of \citet{Aston2017}  (with $k=4$) and \citet{conkokrei2017}  (with $L=J=4$) are very similar to those of the asymptotic test \eqref{testasy} proposed in this paper. However, the new asymptotic  test is still more powerful.

 {In general more power 
is always achieved at the expense of computational time - see Table \ref{tab9}. For example, 
the  asymptotic test \eqref{testasy} proposed in this paper  turns out to be the most powerful procedure, but also requires the most computational time. However, if one wants to improve the power of  the competing tests of \citet{Aston2017}  and \citet{conkokrei2017}, one has to use a larger number of sub-spaces or spatial and temporal principle components, respectively, and this also increases the computational time substantially. 
From Table \ref{tab9} we see that the computational times
of the tests of \citet{Aston2017}  (with $k=4$) and \citet{conkokrei2017}  (with $L=J=4$) are very similar to those of the asymptotic test \eqref{testasy} proposed in this paper. However the  new test is still more powerful. On the other hand 
the test of \citet{Aston2017} with $k=2$ sub-spaces and the  new bootstrap test seem to provide a balance between good power and reasonable computational time.}  

 {Finally,  the asymptotic  test and the bootstrap test proposed in this paper seem to be very robust with respect to different choices for the kernel $\psi$.    
}
}

\subsection{Application to Real Data}

{
We apply our new methods to the acoustic phonetic data discussed in \citet{Aston2017}. This data set has been compiled in the Phonetics Laboratory
of the University of Oxford between 2012-2013. It consists of natural speech recordings
of five languages: French, Italian, Portuguese, American Spanish and Castilian Spanish. The speakers utter the numbers one to ten in their native language. The data
set consists of a sample of $219$ recordings by $23$ speakers. The More information about this data and related project can be found on the website \url{http://www.phon.ox.ac.uk/ancient_sounds}. The data war first transformed similarly as \citet{Aston2017}, i.e., it was transformed to a smoothed log-spectograms through a short-time Fourier transformation using a Gaussian window function with window-size $10$ miliseconds. The log-spectograms were demeaned separately for each language. We employed the bootstrap test with $B=1000$ (M1) and the test based on the simulated quantiles of the asymptotic distribution of $N\widehat{D}_N$ as appeared in Theorem \ref{thm:null_dist} (M2) on the dataset for three choices of kernels as mentioned in \ref{sec41}. The results are presented in Table \ref{phonetic}. The hypothesis of  separability is clearly rejected.
}

\begin{table}[h]
\begin{center}
\caption{\label{phonetic} {P-values of the  tests (with  different kernels) for the  Phonetic Acoustic Data}}
\begin{tabular}{|c|c|c|c|c|c|c|}
\hline
Languages & M1 ($\psi_1 $) & M1($\psi_2  $) & M1($\psi_3$) &  M2 ($\psi_1 $) & M2($\psi_2  $) & M2($\psi_3$)\\
\hline
French & 0.003 & 0.005 & 0.002 &  0.002 & < 0.001 & <0.001\\
\hline
Italian & <0.001 & <0.001 & <0.001 & < 0.001 & < 0.001 & <0.001\\
\hline
Portuguese & <0.001 & 0.002 & <0.001 & < 0.001 & < 0.001 & <0.001\\
\hline
American Spanish & 0.001 & <0.001 & <0.001 & < 0.001 & < 0.001 & <0.001\\
\hline
Castilian Spanish & <0.001 & <0.001 & <0.001 & < 0.001 & < 0.001 & <0.001 \\
\hline
\end{tabular}
\end{center}
\end{table}

\bigskip
\bigskip

\noindent 	
{\bf Acknowledgments.}
This work has been supported in part by the Collaborative Research Center ``Statistical modelling of nonlinear
dynamic processes'' (SFB 823, Teilprojekt  A1, C1) of the German Research Foundation (DFG).  The authors are grateful to Juan Cuesta Albertos, Shahin Tavakoli for very helpful discussions
and to Martina Stein, who typed parts of this paper with considerable technical expertise. The authors would also like to thank three anonymous referees and the Associate Editor for their constructive comments on an earlier version of this paper.

 \bibliographystyle{apalike}
\bibliography{References}

\begin{appendix}
\section{Technical results and proofs}
\label{sec5}
\def\theequation{A.\arabic{equation}}
\setcounter{equation}{0}
\subsection{Properties of Tensor Product Hilbert Spaces}
\label{a1}
In this section we present some auxiliary  results 
which have been used in the proofs of the main results. The following two results are generalizations of Lemma 1.6 and Lemma 1.7 in the online supplement of \citep{Aston2017}. 
\medskip

\begin{lemma}
\label{lem:D0}
The set  \begin{equation} \label{D0a} \mathcal{D}_0 := \Big\{\sum_{i=1}^n A_i \tilde{\otimes} B_i: A_i \in S_2(H_1), B_i \in S_2(H_2), \text{ are finite rank operators }, n \in \mathbb{N}\Big\}
\end{equation}
is dense in  $S_2(H_1 \otimes H_2)$.
\end{lemma}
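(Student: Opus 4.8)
The plan is to exhibit an explicit orthonormal basis of $S_2(H)$ whose every element already belongs to $\mathcal{D}_0$; since the linear span of an orthonormal basis is dense in a Hilbert space, this instantly gives the claim. First I would fix orthonormal bases $\{e_i : i \geq 1\}$ of $H_1$ and $\{f_j : j \geq 1\}$ of $H_2$. By the construction of the tensor product Hilbert space together with the inner product $\langle u \otimes v, w \otimes z\rangle = \langle u,w\rangle\langle v,z\rangle$, the family $\{e_i \otimes f_j : i,j \geq 1\}$ is an orthonormal basis of $H = H_1 \otimes H_2$.

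Next I would invoke the standard fact (see, e.g., \citet{weidmann1980}) that whenever $\{g_\alpha\}$ is an orthonormal basis of a separable Hilbert space $H$, the rank-one operators $\{g_\alpha \otimes_o g_\beta\}$ form an orthonormal basis of $S_2(H)$ under the Hilbert-Schmidt inner product. Applied to the product basis above, this shows that
$$\big\{\, (e_i \otimes f_j) \otimes_o (e_k \otimes f_l) : i,j,k,l \geq 1 \,\big\}$$
is an orthonormal basis of $S_2(H)$.

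The key algebraic step is then the identity
$$(e_i \otimes f_j) \otimes_o (e_k \otimes f_l) = (e_i \otimes_o e_k) \,\tilde{\otimes}\, (f_j \otimes_o f_l),$$
which I would verify by evaluating both sides on a simple tensor $u \otimes v$. From the definition of $\otimes_o$ the left side equals $\langle u, e_k\rangle \langle v, f_l\rangle\, (e_i \otimes f_j)$, while from the definition of $\tilde{\otimes}$ the right side equals $(e_i \otimes_o e_k)u \otimes (f_j \otimes_o f_l)v = \langle u, e_k\rangle \langle v, f_l\rangle\, (e_i \otimes f_j)$. Since the two bounded operators agree on all simple tensors, which span a dense subspace of $H$, they coincide.

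Finally, $e_i \otimes_o e_k \in S_2(H_1)$ and $f_j \otimes_o f_l \in S_2(H_2)$ are rank-one, hence finite rank, so each basis element displayed above lies in $\mathcal{D}_0$. Therefore the dense linear span of this orthonormal basis is contained in $\mathcal{D}_0$, which forces $\mathcal{D}_0$ to be dense in $S_2(H)$. I expect the only delicate point to be the careful bookkeeping between the three distinct tensor-type operations ($\otimes$ on vectors, $\otimes_o$ for rank-one operators, and $\tilde{\otimes}$ for operators) and citing the orthonormal basis of $S_2(H)$ in the correct form; everything else is routine.
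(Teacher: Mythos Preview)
Your proof is correct and takes a genuinely different, cleaner route than the paper's. The paper argues by approximation: given $T \in S_2(H_1 \otimes H_2)$ it writes the singular value decomposition $T = \sum_{j\ge 1}\lambda_j\, u_j \otimes_o v_j$, truncates to a finite sum, and then approximates each $u_j, v_j \in H_1\otimes H_2$ by finite sums of simple tensors $e^{(1)}\otimes e^{(2)}$, tracking the $\epsilon$'s through a chain of triangle inequalities. Your argument bypasses all of the $\epsilon$-bookkeeping by observing once and for all that the rank-one operators $(e_i\otimes f_j)\otimes_o(e_k\otimes f_l)$ form an orthonormal basis of $S_2(H)$ and already factor as $(e_i\otimes_o e_k)\,\tilde{\otimes}\,(f_j\otimes_o f_l)\in\mathcal{D}_0$. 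This buys a shorter, more structural proof that makes the isomorphism $S_2(H_1\otimes H_2)\cong S_2(H_1)\otimes S_2(H_2)$ essentially visible; the paper's version, while longer, has the minor advantage of being self-contained (it does not need to quote the orthonormal-basis description of $S_2$) and of yielding an explicit approximant built from the singular vectors of $T$, which is in the same spirit as their subsequent Lemma~\ref{lem:ker_approx}.
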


\begin{proof}
Let $T \in S_2(H_1 \otimes H_2)$, then $T = \sum_{j\ge 1} \lambda_j u_j \otimes_o v_j$ where $(\lambda_j)_{j \ge 1}$ are the singular values of $T$, with $\sum_j \lambda_j^2 < \infty$; and $\{u_j\}_{j \ge 1}$ and $\{v_j\}_{j \ge 1}$ are orthonormal subsets of $H_1 \otimes H_2$. The vectors $u_j$ and $v_j$'s can be further decomposed as $u_j = \sum_{l \ge 1}s_{j,l}e^{(1)}_l \otimes e^{(2)}_l$ and $v_j = \sum_{l \ge 1}t_{j,l}f^{(1)}_l \otimes f^{(2)}_l$, {where $\{e_l^{(1)}\}_{l \geq 1}$ and $\{f_l^{(1)}\}_{l \geq 1}$ are orthonormal basis of $H_1$; $\{e_l^{(2)}\}_{l \geq 1}$ and $\{f_l^{(2)}\}_{l \geq 1}$ are orthonormal basis of $H_2$ and the series converge in the norm of $H_1 \otimes H_2$.} Let
\begin{align*}
u_j^{M} = \sum_{l=1}^M s_{j,l}e^{(1)}_l \otimes e^{(2)}_l ~~~\text{ and }~~~ v_j^M = \sum_{l =1}^Mt_{j,l}f^{(1)}_l \otimes f^{(2)}_l.
\end{align*}
Given $1 > \epsilon > 0$, choose $N$ large enough such that $\VERT T - \sum_{j=1}^N \lambda_j u_j \otimes_o v_j \VERT_2 \leq \epsilon/2.$ With this we write
\begin{align*}
\left\VERT T - \sum_{j=1}^N \lambda_j u^M_j \otimes_o v^M_j \right\VERT_2 \leq &\left\VERT  T - \sum_{j=1}^N \lambda_j u_j \otimes_o v_j \right\VERT_2\\
&+ \left \VERT \sum_{j=1}^N\lambda_j\left[(u_j - u_j^M)\otimes_o v_j + u_j^M \otimes_o (v_j - v_j^M)\right] \right\VERT_2\\
\leq &~ \epsilon/2 + \sum_{j=1}^N \lambda_j \left\VERT (u_j - u_j^M)\otimes_o v_j + u_j^M \otimes_o (v_j - v_j^M)\right\VERT_2\\
\leq &~ \epsilon/2 + \left[\sum_{j=1}^N \lambda_j^2\right]^{1/2}\left[\sum_{j=1}^N \left\VERT (u_j - u_j^M)\otimes_o v_j + u_j^M \otimes_o (v_j - v_j^M)\right\VERT_2 \right]^{1/2}\\
\leq &~ \epsilon/2 + \VERT T \VERT_2 \left[\sum_{j=1}^N \left[\left\| (u_j - u_j^M)\right\|\left\| v_j\right\| + \left\|u_j^M \right\| \left\|(v_j - v_j^M)\right\|\right] \right]^{1/2}.
\end{align*}
Choose $M \geq 1$ such that
$$\| u_j - u_j^M \| \leq \min\left\{\frac{\epsilon^2}{12N \VERT T \VERT_2^2},1\right\} ~~\text{ and }~~ \| v_j - v_j^M \| \leq \frac{\epsilon^2}{12N \VERT T \VERT_2^2}.$$
As $\|u_j \| = 1$ and $\| u_j^M \| \leq \| u_j \| + \|u_j - u_j^M\|$, with this choice we have
$$\left\VERT T - \sum_{j=1}^N \lambda_j u^M_j \otimes_o v^M_j \right\VERT_2 \leq \epsilon$$
{The proof is now complete by noting that
(under the null hypothesis of separability)
$$\sum_{j=1}^N \lambda_j u^M_j \otimes_o v^M_j = \sum_{j=1}^N \sum_{k,l =1}^M \left(\lambda_j s_{j,l} t_{j,k} e_{l}^{(1)} \otimes f_k^{(1)}\right) \widetilde{\otimes} \left(e_{l}^{(2)} \otimes f_k^{(2)}\right).$$ }
\end{proof}
\medskip

\begin{lemma}
\label{lem:ker_approx}
Let $C \in S_2(L^2(S \times T))$, where $T$ and $S$ are compact subsets of ~$\R^p$ and ~$\R^q$ respectively, be an integral operator with symmetric continuous kernel $c$. For any $\epsilon >0$, there exists an 
operator
$C^{\prime} = \sum_{n=1}^N A_n \tilde{\otimes} B_n$, where $A_n: L^2(S) \to L^2(S)$, $B_n: L^2(T) \to L^2(T)$ are finite rank operators with continuous kernels $a_n$ and $b_n$ respectively, such that
\begin{itemize}
\item [(a)] $\VERT C - C^{\prime} \VERT_2 \leq \epsilon,$
\item [(b)] $\sup_{s,s' \in S, t,t' \in T} \vert c(s,t,s',t') - c^{\prime}(s,t,s',t')\vert \leq \epsilon$, where $c^{\prime}$ is the kernel of the operator $C^{\prime}$.
\end{itemize}
\end{lemma}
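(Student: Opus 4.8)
The plan is to obtain $C'$ from a single application of the Stone--Weierstrass theorem to the continuous kernel $c$, arranged so that the approximating kernel is automatically a finite sum of \emph{fully separated} products of univariate continuous functions. This one device delivers the finite rank property, the continuity of the factor kernels, and both norm estimates at once.

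First I would observe that $K := (S \times T) \times (S \times T)$ is compact (since $S$ and $T$ are) and that $c \in C(K)$. I would then consider the collection $\mathcal{A}$ of all finite sums of functions of the form $(s,t,s',t') \mapsto \alpha(s)\beta(t)\gamma(s')\delta(t')$ with $\alpha,\gamma \in C(S)$ and $\beta,\delta \in C(T)$, and check that $\mathcal{A}$ is a subalgebra of $C(K)$: it is a linear subspace, it contains the constants, and it is closed under multiplication because the product of two such elementary tensors is again an elementary tensor in the same four variables. It also separates points, since two distinct points of $(S\times T)^2$ differ in at least one of their four coordinates, and a single univariate factor (with the remaining factors taken constant) distinguishes them. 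Stone--Weierstrass then yields density of $\mathcal{A}$ in $(C(K), \|\cdot\|_\infty)$.

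Next, given $\epsilon > 0$, I would set $\epsilon' = \epsilon/\max(1,|S|\,|T|)$ and pick $c' \in \mathcal{A}$ with $\|c - c'\|_\infty \leq \epsilon'$, which already gives part (b). Writing $c'$ in grouped form $c'(s,t,s',t') = \sum_{n=1}^N a_n(s,s')\,b_n(t,t')$ with $a_n(s,s') = \alpha_n(s)\gamma_n(s')$ and $b_n(t,t') = \beta_n(t)\delta_n(t')$, each $a_n$ and $b_n$ is continuous and is the kernel of a rank-one (hence finite rank) integral operator $A_n$ on $L^2(S)$ and $B_n$ on $L^2(T)$. By the definition of the tensor-product operator, $C' := \sum_{n=1}^N A_n \tilde{\otimes} B_n$ then has kernel $c'$, so $C'$ has exactly the required structure. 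For part (a) I would bound the Hilbert--Schmidt norm by the sup norm, using that the Lebesgue measure of $K$ equals $(|S|\,|T|)^2$:
$$\VERT C - C' \VERT_2^2 = \int_S\int_T\int_S\int_T |c - c'|^2 \, ds\,dt\,ds'\,dt' \leq \|c - c'\|_\infty^2\,(|S|\,|T|)^2,$$
so that $\VERT C - C' \VERT_2 \leq \epsilon'\,|S|\,|T| \leq \epsilon$.

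The main obstacle, and the reason for phrasing the Stone--Weierstrass step in terms of \emph{fully} separated products, is the requirement that the factors $A_n, B_n$ be \emph{finite rank} rather than merely Hilbert--Schmidt. A direct approximation of $c$ by functions of the form $\sum_n a_n(s,s')b_n(t,t')$ with $a_n \in C(S\times S)$, $b_n \in C(T\times T)$ would only produce integral operators with continuous kernels, which need not be finite rank. Separating all four variables forces every factor kernel to be a product of univariate functions, i.e. rank one, which resolves this point cleanly while keeping the kernels continuous.
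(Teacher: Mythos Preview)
Your argument is correct and takes a genuinely different route from the paper. The paper invokes Mercer's theorem to expand $c(s,t,s',t')=\sum_{n\ge 1}\lambda_n u_n(s,t)u_n(s',t')$ with uniform convergence, truncates the sum, and then approximates each eigenfunction $u_n$ by a finite tensor sum $\sum_{l=1}^M f_{n,l}^{(1)}\otimes f_{n,l}^{(2)}$; the two truncation levels $N$ and $M$ are balanced by Cauchy--Schwarz. By contrast, you bypass the spectral decomposition entirely and apply Stone--Weierstrass once, directly to $c\in C\big((S\times T)^2\big)$, with the algebra of \emph{fully separated} products $\alpha(s)\beta(t)\gamma(s')\delta(t')$. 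This single step simultaneously yields uniform approximation, continuous factor kernels, and rank-one factors $A_n,B_n$, after which (a) follows from the trivial bound of the $L^2$ norm by the sup norm. Your approach is shorter, needs no eigenvalue bookkeeping, and---since Stone--Weierstrass does not care about symmetry or positivity---actually proves the lemma without the implicit positive semidefiniteness that the paper's Mercer step relies on (the lemma as stated only assumes a symmetric continuous kernel). The paper's route, on the other hand, keeps the spectral structure visible, which can be convenient if one later wants quantitative control in terms of the eigenvalue decay.
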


\begin{proof}
By Mercer's Theorem {(Proposition 1.2 from \citep{brislawn1988kernels})}, there exists continuous orthonormal functions $\{u_n\}_{n \ge 1} \subset L^2(S \times T)$ and $\lambda_n$ is the summable sequence of positive eigenvalues, such that
$$c(s,t,s',t') = \sum_{n \ge 1} \lambda_n u_n(s,t)u_n(s',t')$$
where the convergence is absolute and uniform.

Let $U_n$ be the integral operator with kernel $u_n$ and $C_N := \sum_{n=1}^N \lambda_n U_n \otimes_o U_n$. Denote the kernel of $C_N$ as $c_N$. As $u_n$  can be approximated by sums of tensor products of continuous functions, let ${ u_n^M(s,t)} := \sum_{l=1}^M f_{n,l}^{(1)}(s) f_{n,l}^{(2)}(t)$, where $f_{n,l}^{(1)} \in L^2(S), f_{n,l}^{(2)} \in L^2(T)$ are continuous, and choose $M$ such that
{ $$\int_{T}\int_{S} \sup_{s,t} \vert u_n(s,t) - u_n^{M}(s,t)\vert dsdt \leq \min\left\{\frac{\epsilon^2}{12N\kappa\VERT C \VERT_2^2},\kappa\right\},$$}
where $\kappa = \max_{n=1, \dots,N}\int_T\int_S \sup_{s, t} \vert u_n(s,t)\vert dsdt.$ Writing the kernel $$c^{N,M}(s,t,s',t') := \sum_{n=1}^N \lambda_n u_n^M(s,t)  U_n^M(s',t'),$$ we have
\begin{align*}
& \int_T\int_T\int_S\int_S \sup_{s,t,s',t'} \vert c_N(s,t,s',t') - c^{N,M}(s,t,s't')\vert dsds'dtdt'\\
&\leq \sum_{n=1}^N \lambda_n\left[\int_{T}\int_{S} \sup_{s,t} \vert u_n(s,t) - u_n^{M}(s,t)\vert dsdt\right]\left[\int_{T}\int_{S} \sup_{s,t} \left(\vert u_n(s,t) \vert + \vert u_n^{M}(s,t)\vert\right) dsdt\right].
\end{align*}
An application of the Cauchy-Schwartz inequality along with the choice of $M$ gives an upper bound to the last quantity to be $\epsilon/2$. Similar calculations show that 
$$
\int_T\int_T\int_S\int_S \sup_{s,t,s',t'} \vert c_N(s,t,s',t') - c^{N,M}(s,t,s't')\vert dsds'dtdt' \leq \epsilon/2.
$$
Finally, as $c^{N,M}$ is indeed a finite sum of tensor products of finite rank kernels, we have the desired result.
\end{proof}
\medskip

We conclude this section with a simple result about Gaussian processes on a Hilbert space. For this purpose  recall that a
random element $\mathcal{G} $ on  a real separable Hilbert space
$H$ is said to be Gaussian with mean $\mu \in H$ and covariance operator $\Gamma : H \mapsto H$ if for all $x \in H$, the random variable $\langle \mathcal{G}, x \rangle$ has a normal distribution with mean $\langle \mu, x \rangle$ and variance $\langle \Gamma x, x \rangle$  (See Section 1.3 of \citet{lifshits2012lectures} for more details).

\begin{lemma}
\label{lem:gauss_ten}
Let $H_1$ and $H_2$ be two real separable Hilbert spaces and $\mathcal{G}$ be a Gaussian process on $S_2(H_2)$. Then for all $A \in S_2(H_1)$, the process $A \tilde{\otimes} \mathcal{G}$ is a Gaussian process in $S_2(H_1 \otimes H_2)$.
\end{lemma}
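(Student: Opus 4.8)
The plan is to realize $A \tilde{\otimes} \mathcal{G}$ as the image of the Gaussian element $\mathcal{G}$ under a fixed bounded linear map, and then to invoke the definition of a Gaussian element recalled above, according to which it suffices to verify that every one-dimensional projection $\langle A \tilde{\otimes} \mathcal{G}, Z\rangle_{HS}$, $Z \in S_2(H)$, is normally distributed. Thus the proof reduces to an adjoint computation together with routine measurability bookkeeping.

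First I would fix $A \in S_2(H_1)$ and consider the linear map $L: S_2(H_2) \to S_2(H)$ defined by $L(B) = A \tilde{\otimes} B$. By the tensor-norm identity $\VERT A \tilde{\otimes} B \VERT_2 = \VERT A \VERT_2 \VERT B \VERT_2$ recalled in \autoref{sec2}, the map $L$ is bounded with operator norm equal to $\VERT A \VERT_2$, hence continuous and Borel measurable. Consequently $A \tilde{\otimes} \mathcal{G} = L\mathcal{G}$ is a well-defined random element of $S_2(H)$, satisfying $\VERT A \tilde{\otimes} \mathcal{G}\VERT_2 = \VERT A \VERT_2 \VERT \mathcal{G}\VERT_2 < \infty$ almost surely.

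Next I would identify the pairing of $A \tilde{\otimes} \mathcal{G}$ with an arbitrary $Z \in S_2(H)$. Applying the identity \eqref{T1} of \autoref{prop:t1} pathwise with $C = Z$, $C_1 = A$ and $B = \mathcal{G}(\omega)$ yields
\[
\langle A \tilde{\otimes} \mathcal{G}, Z\rangle_{HS} = \langle \mathcal{G}, T_1(Z, A)\rangle_{S_2(H_2)}.
\]
Here it is essential that \eqref{T1} is valid for \emph{every} $Z \in S_2(H)$, not only for elementary tensors; this is precisely what the density argument in the proof of \autoref{prop:t1} supplies, so I would cite that result rather than reprove it. Moreover $T_1(Z,A)$ is a fixed, non-random element of $S_2(H_2)$, finite by \autoref{cor:t1ineq}.

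Finally, since $\mathcal{G}$ is Gaussian in $S_2(H_2)$, the definition recalled above ensures that $\langle \mathcal{G}, x\rangle_{S_2(H_2)}$ is normally distributed for every fixed $x \in S_2(H_2)$; taking $x = T_1(Z,A)$ shows that $\langle A \tilde{\otimes} \mathcal{G}, Z\rangle_{HS}$ is normal. As $Z$ was arbitrary, $A \tilde{\otimes} \mathcal{G}$ is Gaussian in $S_2(H_1 \otimes H_2)$, with mean $A \tilde{\otimes} \E[\mathcal{G}]$ and covariance operator $L \Gamma L^{*}$ (where $\Gamma$ is the covariance operator of $\mathcal{G}$), although neither is needed for the statement. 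The only genuinely delicate point is the one flagged above, namely the validity of the pairing identity for all $Z$ rather than merely for elementary tensors; everything else is the routine norm and measurability verification.
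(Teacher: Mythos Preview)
Your proof is correct and slightly more streamlined than the paper's. The paper argues directly: it reduces to $T \in \mathcal{D}_0$ by density (Lemma~\ref{lem:D0}) together with the fact that limits of Gaussian random variables are Gaussian, and for $T = \sum_{n=1}^N A_n \tilde{\otimes} B_n$ computes $\langle T, A \tilde{\otimes} \mathcal{G}\rangle_{HS} = \sum_n \langle A_n, A\rangle_{S_2(H_1)} \langle B_n, \mathcal{G}\rangle_{S_2(H_2)}$, a finite linear combination of jointly normal variables. You instead recognize that the density argument has already been packaged into Proposition~\ref{prop:t1}, so that \eqref{T1} immediately gives $\langle A \tilde{\otimes} \mathcal{G}, Z\rangle_{HS} = \langle \mathcal{G}, T_1(Z,A)\rangle_{S_2(H_2)}$ for every $Z \in S_2(H)$, reducing the question to a single one-dimensional projection of $\mathcal{G}$. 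Conceptually this is the observation that $L^{*} = T_1(\cdot, A)$ is the Hilbert adjoint of $L(B) = A \tilde{\otimes} B$, and bounded linear images of Gaussians are Gaussian. Both routes are short; yours avoids the limit-of-Gaussians step and the joint-normality remark at the cost of invoking the earlier proposition, which is a perfectly natural reuse.
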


\begin{proof}
We will show for any $T \in S_2(H_1 \otimes H_2)$, the random variable $\langle T, A \tilde{\otimes} \mathcal{G} \rangle$ has a normal distribution. By Lemma \ref{lem:D0} and the  continuity of the inner product it is enough to show the result for $T \in \mathcal{D}_0$. Therefore let $T = \sum_{n=1}^N A_n \tilde{\otimes} B_n$ then $$\langle T, A \tilde{\otimes} \mathcal{G} \rangle = \sum_{n=1}^N \langle A_n,A \rangle_{S_2(H_1)} \langle B_n, \mathcal{G} \rangle_{S_2(H_2)}$$
which is sum of normal random variables and hence normal.
\end{proof}

\subsection{Proof of Proposition \ref{prop:t1}}
\label{a2}
By Lemma \ref{lem:D0} in Section  \ref{a1}
 the space
\begin{equation} \label{D0} \mathcal{D}_0 := \Big\{\sum_{i=1}^n A_i \tilde{\otimes} B_i: A_i \in S_2(H_1), B_i \in S_2(H_2), n \in \mathbb{N}\Big\}
\end{equation}
is a dense subset of $S_2(H_1 \otimes H_2)$ (note that a similar result for the space $S_1(H_1 \otimes H_2)$ has been established in Lemma 1.6 of the supplementary material in \citet{Aston2017}).  For all $B \in S_2(H_2)$, $E \in \mathcal{D}_0$ and $C_1 \in S_2(H_1)$, we have
\begin{align}
\langle B, T_1(E,C_1) \rangle_{S_2(H_2)}= & \Big\langle B, \sum_{i=1}^n \langle A_i,C_1\rangle_{S_2(H_1)} B_i\Big\rangle_{S_2(H_2)}
= \sum_{i=1}^n \langle  A_i,C_1 \rangle_{S_2(H_1)} \langle B,B_i \rangle_{S_2(H_2)}\nonumber\\
= &\Big \langle \sum_{i=1}^n A_i \tilde{\otimes} B_i , C_1 \tilde{\otimes} B\Big\rangle_{HS} = \langle E, C_1 \tilde{\otimes} B  \rangle_{HS}.\label{eq:t1d0}
\end{align}
Using the fact that 
\begin{align} \label{test}
\VERT T_1(E,C_1) \VERT_2 \leq \VERT T_1(E,C_1) \VERT_1 =\sup\big\{\langle B, T_1(E,C_1)\rangle_{S_2(H)} : B \in S_2(H_2), \VERT B \VERT_{\infty} \leq 1\big\},
\end{align}
\eqref{eq:t1d0} and the Cauchy Schwarz inequality it follows 
that 
\begin{equation}\label{new1}
 \VERT T_1(E,C_1) \VERT_2 \leq \VERT C_1 \VERT_2 \VERT E \VERT_2. 
\end{equation}
Therefore, for each $C_1 \in S_2(H_1)$, we can extend $T_1(.,C_1)$ continuously on $S_2(H)$.

Furthermore
as \eqref{T1} holds for all $C \in \mathcal{D}_0$ and the maps $C \mapsto \langle B, T_1(C,C_1) \rangle_{S_2(H_1)}$ and $C \mapsto \langle C, C_1 \tilde{\otimes} B  \rangle_{HS}$ are continuous for all $B \in S_2(H_2)$ and $C_1 \in S_2(H_1)$,  we can conclude that \eqref{T1} holds for all $C \in S_2(H)$.

\subsection{Proof of Proposition \ref{prop:t1alt}}
\label{a3}

Recall the definition of the set
$\mathcal{D}_0$ in \eqref{D0} and let
$C = \sum_{n=1}^N A_n \tilde{\otimes} B_n \in \mathcal{D}_0$, where $A_n \in S_2(H_1)$, $B_n \in S_2(H_2)$. We write
\begin{align*}
\langle C, C_1 \tilde{\otimes} T_1(C,C_1)\rangle_{HS} = &\Big \langle C, C_1 \tilde{\otimes} \sum^N_{n=1} \langle A_n,C_1 \rangle_{S_2(H_1)} B_n \Big \rangle_{HS}\\
= & \sum^N_{n=1}\langle A_n,C_1 \rangle_{S_2(H_1)}\langle C, C_1 \tilde{\otimes} B_n\rangle_{HS}\\
= &\sum^N_{n=1} \sum^N_{m=1} \langle A_n,C_1 \rangle_{S_2(H_1)} \langle A_m,C_1 \rangle_{S_2(H_1)} \langle B_m,B_n \rangle_{S_2(H_2)}.
\end{align*}
On the other hand,
\begin{align*}
\langle T_1(C,C_1), T_1(C,C_1) \rangle_{S_2(H_2)} = &\Big \langle \sum^N_{n=1} \langle A_n,C_1\rangle_{S_2(H_1)} B_n, \sum^N_{m=1} \langle A_m,C_1\rangle_{S_2(H_1)} B_m \Big \rangle_{S_2(H_2)}\\
= &\sum^N_{n=1} \sum^N_{m=1} \langle A_n,C_1 \rangle_{S_2(H_1)} \langle A_m,C_1 \rangle_{S_2(H_1)} \langle B_m,B_n \rangle_{S_2(H_2)}.
\end{align*}
Therefore, for all $C_1 \in S_2(H_1)$, the functions $f,g: S_2(H) \mapsto \R$ defined by
$$
f(C) := \langle C, C_1 \tilde{\otimes} T_1(C,C_1)\rangle_{HS} ~~ \mbox{and} ~~ g(C) := \VERT T_1(C,C_1)\VERT_2^2
$$
are continuous and coincide on the dense subset ${\cal D}_0$ of $S_2(H)$. So $f(C) = g(C)$ for all $C \in S_2(H)$ and hence the result follows.

\subsection{Proof of Proposition \ref{prop:t1kernel}}
\label{a4}
By Lemma \ref{lem:ker_approx} for a given $\epsilon>0$ there exists an integral operator $C'$ with kernel $c'$ such that
$$
\VERT C - C' \VERT_2 < \frac{\epsilon}{2\VERT C_1 \VERT_2} ~\mbox{ and } ~\|c-c'\|_{\infty} < \epsilon/2,
$$
where $C' = \sum_{n=1}^N A_n \tilde{\otimes} B_n$, and $A_n$ and $B_n$ are finite rank operators with continuous kernels $a_n$ and $b_n$.
Note that $\sum_{n=1}^N a_n b_n$  is the kernel of the operator  $C^\prime $.
Let $K$ be the integral operator with the kernel defined in \eqref{kdef}
and $K'$ be the integral operator with kernel 
$$
k'(t,t') := \int_S \int_{S} c'(s,t,s't')c_1(s,s')ds ds',
$$
then (note that $K'$ is a Hilbert-Schmidt operator)
\begin{align}
\VERT T_1(C,C_1) - K \VERT_2 \leq & \VERT T_1(C,C_1) - T_1(C',C_1)\VERT_2 + \VERT T_1(C',C_1) - K'\VERT_2 + \VERT K'- K \VERT_2. \label{ineq:T1}
\end{align}
By  \eqref{new1} the first term is bounded by $\VERT C_1 \VERT_2 \VERT C - C_1 \VERT_2 < \epsilon/2$. To handle the second term note that for any $f \in H_2$,
\begin{align*}
T_1(C',C_1)f(t) = &T_1\left(\sum_{n=1}^N A_n \tilde{\otimes} B_n, C_1\right)f(t) = \sum_{n=1}^N \langle A_n,C_1 \rangle_{S_2(H_1)}B_nf(t)\\
= &\sum_{n=1}^N \int_T\int_S \int_S a_n(s,s')c_1(s,s')b_n(t ,t')f(t')dsds'dt'\\
= &\int_T \int_S \int_S c'(s,t,s',t')c_1(s,s')dsds'f(t')dt'\\
= & \int_T k'(t,t')f(t')dt' = K'f(t).
\end{align*}
Therefore, the second term in \eqref{ineq:T1} is zero. If $|S|$ and $|T|$ denote the Lebesgue measure of the set $S$ and $T$, respectively, the third term  can be written as
\begin{align*}
&\left[\int_T\int_T\left(\int_S\int_S(c(s,t,s',t') -c'(s,t,s't'))c_1(s,s')dsds'\right)^2dtdt'\right]^{1/2}\\
&~~~~~~~~~~~~~~~~~~~~~~~~~~~~~~~~~~~~~~~~~~\leq \|c - c'\|_{\infty}\|c_1\|_2 \vert S \vert^2 \vert T \vert,
\end{align*}
which is bounded  by $\|c_1\|_2 \vert S \vert^2 \vert T \vert\epsilon/2$, {as $S$ and $T$ are bounded}. Since the choice of $\epsilon>0$ is arbitrary, this proves the assertion of Proposition \ref{prop:t1kernel}.

{
\subsection{Calculation of the limiting variance in Theorem \ref{thm:asym_dist}\label{a6}}
{
For the calculation of the limiting variance, we first calculate the 
elements of  the covariance matrix $\Sigma$ starting by simplifying the second coordinate of $F_2(\mathcal{G},C)$, that  is 
\begin{align*}
\Big\langle C,T_2(\mathcal{G},\Delta) \widetilde{\otimes} T_1(C,T_2(C,\Delta))\Big\rangle_{HS} = &\Big\langle T_2(\mathcal{G},\Delta), T_2(C, T_1(C,T_2(C,\Delta)))\Big\rangle_{S_2(H_1)}\\
&=\Big\langle \mathcal{G}, T_2(C, T_1(C,T_2(C,\Delta)))\widetilde{\otimes} \Delta \Big\rangle_{HS}~,
\end{align*}
where we used  Proposition \ref{prop:t1}.  
With the notations 
$$
T_2^{\Delta} := T_2(C,\Delta)~,~ ~ T_{12}^{\Delta} := T_1(C,T_2(C,\Delta))~,~~ 
T_{212}^{\Delta} := T_2(C,T_1(C,T_2(C,\Delta)))
$$ 
the elements in the covariance matrix $\Sigma = (\Sigma_{ij})_{i,j=1}^3$ of the random vector 
\begin{align*}
F_2(\mathcal{G},C) = 2\left(
\begin{array}{c} 
\left\langle \mathcal{G},C\right\rangle_{HS}\\
\left\langle \mathcal{G},  T_2^{\Delta} \widetilde{\otimes}  T_{12}^{\Delta}\right\rangle_{HS} + \langle \mathcal{G}, T_{212}^{\Delta}\widetilde{\otimes} \Delta \rangle_{HS} \\
\langle \mathcal{G},T_2^{\Delta} \widetilde{\otimes} \Delta \rangle_{HS}
\end{array}
\right)~
\end{align*}
are  given by
\begin{align*}
\Sigma_{11} = &4\Big\langle \Gamma C, C \Big\rangle_{HS}\\
\Sigma_{22} = &4\Big\langle \Gamma T_2^{\Delta}\widetilde{\otimes}T_{12}^{\Delta}, T_2^{\Delta}\widetilde{\otimes}T_{12}^{\Delta} \Big\rangle_{HS} + 4\Big\langle \Gamma T_{212}^{\Delta}\tilde{\otimes } \Delta,T_{212}^{\Delta}\tilde{\otimes } \Delta \Big\rangle_{HS}
+  8\Big\langle \Gamma T_2^{\Delta}\widetilde{\otimes}T_{12}^{\Delta},T_{212}^{\Delta}\tilde{\otimes } \Delta \Big\rangle_{HS}\\
\Sigma_{33} = &4\Big\langle \Gamma T_2^{\Delta}\tilde{\otimes } \Delta,T_2^{\Delta}\tilde{\otimes } \Delta \Big\rangle_{HS}\\
\Sigma_{12} = &4\Big\langle \Gamma C,  T_2^{\Delta}\widetilde{\otimes}T_{12}^{\Delta} \Big\rangle_{HS}+ 4\Big\langle \Gamma C,T_{212}^{\Delta}\tilde{\otimes } \Delta \Big\rangle_{HS}\\
\Sigma_{13} = &4\Big\langle \Gamma C,T_2^{\Delta}\tilde{\otimes } \Delta\Big\rangle_{HS}\\
\Sigma_{23} = &4\Big\langle \Gamma T_2^{\Delta}\widetilde{\otimes}T_{12}^{\Delta},T_2^{\Delta}\tilde{\otimes } \Delta\Big\rangle_{HS} + 4\Big\langle \Gamma T_{212}^{\Delta}\tilde{\otimes } \Delta,T_{2}^{\Delta}\tilde{\otimes } \Delta \Big\rangle_{HS}.
\end{align*}
A straightforward calculation then gives 
\begin{align*}
\nu^2 := &4\langle \Gamma C, C\rangle_{HS} + 4\frac{\left\langle \Gamma \big(T_2^{\Delta} \widetilde{\otimes} T_{12}^{\Delta}\big),T_2^{\Delta} \widetilde{\otimes} T_{12}^{\Delta}\right\rangle_{HS}}{\VERT T_2^{\Delta}\VERT_2^4} + 4\frac{\left\langle \Gamma \big(T_{212}^{\Delta} \widetilde{\otimes} \Delta\big),T_{212}^{\Delta} \widetilde{\otimes} {\Delta}\right\rangle_{HS}}{\VERT T_2^{\Delta}\VERT_2^4}\\
&+8\frac{\left\langle \Gamma \big(T_2^{\Delta} \widetilde{\otimes} T_{12}^{\Delta}\big),T_{212}^{\Delta} \widetilde{\otimes} {\Delta}\right\rangle_{HS}}{\VERT T_2^{\Delta}\VERT_2^4} + \frac{4\VERT T_{12}^{\Delta}\VERT_2^4}{\VERT T_2^{\Delta}\VERT_2^8}{\left\langle \Gamma \big(T_2^{\Delta} \widetilde{\otimes} {\Delta}\big),T_2^{\Delta} \widetilde{\otimes} {\Delta}\right\rangle_{HS}}\\
&- \frac{8}{\VERT T_2^{\Delta}\VERT_2^2}\left\langle \Gamma C,T_{212}^{\Delta} \widetilde{\otimes} {\Delta}\right\rangle_{HS} - \frac{8}{\VERT T_2^{\Delta}\VERT_2^2}\left\langle \Gamma C,T_{2}^{\Delta} \widetilde{\otimes} T_{12}^{\Delta} \right\rangle_{HS} + \frac{8\VERT T_{12}^{\Delta}\VERT_2^2}{\VERT T_2^{\Delta}\VERT_2^4}{\left\langle \Gamma C,T_2^{\Delta} \widetilde{\otimes} {\Delta}\right\rangle_{HS}}\\
& - \frac{8\VERT T_{12}^{\Delta}\VERT_2^2}{\VERT T_2^{\Delta}\VERT_2^6}{\left\langle \Gamma \big(T_2^{\Delta} \widetilde{\otimes} T_{12}^{\Delta}\big),T_2^{\Delta} \widetilde{\otimes} {\Delta}\right\rangle_{HS}}
- \frac{8\VERT T_{12}^{\Delta}\VERT_2^2}{\VERT T_2^{\Delta}\VERT_2^6}{\left\langle \Gamma \big(T_{212}^{\Delta} \widetilde{\otimes} {\Delta}\big),T_2^{\Delta} \widetilde{\otimes} {\Delta}\right\rangle_{HS}}.\\
= &A_1 + A_2 + A_3 + A_4 + A_5 + A_6 + A_7 + A_8 + A_9 + A_{10}, 
\end{align*}
where the last equality defines the terms $A_{1}, \ldots , A_{10}$ in an obvious manner.
Observing the definition of $A$ and $B$ in Theorem \ref{thm:asym_dist} we have 
$$A_1+A_2 + A_7 = 4\langle \Gamma A, A \rangle_{HS}~~~~~~~A_3 + A_5 + A_{10} = 4\langle \Gamma B, B \rangle_{HS}$$
$$A_4 + A_6 + A_8 + A_9 = - 2 \langle \Gamma A, B \rangle_{HS} = - 2 \langle \Gamma B, A \rangle_{HS}.
$$
Therefore the limiting variance simplifies to the expression in \eqref{nusquare}.}
}

\end{appendix}

\end{document}